\numberwithin{equation}{section}
\newtheorem{lemma}{Lemma}
\newtheorem{theorem}{Theorem}
\theoremstyle{remark}
\newtheorem{remark}{Remark}
\newcommand{\beq}{\begin{equation}}
\newcommand{\eeq}{\end{equation}}
\newcommand{\beqnn}{\begin{equation*}}
\newcommand{\eeqnn}{\end{equation*}}
\newcommand{\tp}[1]{\,{\vphantom{#1}}^\mathrm{t}\!\,#1}
\newcommand{\CC}{\mathbb{C}}
\newcommand{\PP}{\mathbb{P}}
\newcommand{\RR}{\mathbb{R}}
\newcommand{\ZZ}{\mathbb{Z}}
\newcommand{\calN}{\mathcal{N}}
\newcommand{\calP}{\mathcal{P}}
\newcommand{\bst}{\boldsymbol{t}}
\newcommand{\bsx}{\boldsymbol{x}}
\newcommand{\bsT}{\boldsymbol{T}}
\newcommand{\bszero}{\boldsymbol{0}}
\newcommand{\frD}{\mathrm{4D}}
\begin{document}

\title{4D limit of melting crystal model\\
and its integrable structure}
\author{Kanehisa Takasaki\thanks{E-mail: takasaki@math.kindai.ac.jp}\\
{\normalsize Department of Mathematics, Kindai University}\\ 
{\normalsize 3-4-1 Kowakae, Higashi-Osaka, Osaka 577-8502, Japan}}
\date{}
\maketitle 

\begin{abstract}
This paper addresses the problems of quantum spectral curves and 4D limit 
for the melting crystal model of 5D SUSY $U(1)$ Yang-Mills theory on 
$\mathbb{R}^4\times S^1$.  The partition function $Z(\boldsymbol{t})$ 
deformed by an infinite number of external potentials is a tau function 
of the KP hierarchy with respect to the coupling constants $\boldsymbol{t} 
= (t_1,t_2,\ldots)$.  A single-variate specialization $Z(x)$ of 
$Z(\boldsymbol{t})$ satisfies a $q$-difference equation representing 
the quantum spectral curve of the melting crystal model. In the limit 
as the radius $R$ of $S^1$ in $\mathbb{R}^4\times S^1$ tends to $0$, 
it turns into a difference equation for a 4D counterpart $Z_{\mathrm{4D}}(X)$ 
of $Z(x)$. This difference equation reproduces the quantum spectral curve 
of Gromov-Witten theory of $\mathbb{CP}^1$.  $Z_{\mathrm{4D}}(X)$ is obtained 
from $Z(x)$ by letting $R \to 0$ under an $R$-dependent transformation 
$x = x(X,R)$ of $x$ to $X$.  A similar prescription of 4D limit can be 
formulated for $Z(\boldsymbol{t})$ with an $R$-dependent transformation 
$\boldsymbol{t} = \boldsymbol{t}(\boldsymbol{T},R)$ of $\boldsymbol{t}$ 
to $\boldsymbol{T} = (T_1,T_2,\ldots)$.  This yields a 4D counterpart 
$Z_{\mathrm{4D}}(\boldsymbol{T})$ of $Z(\boldsymbol{t})$.  
$Z_{\mathrm{4D}}(\boldsymbol{T})$ agrees with a generating function of 
all-genus Gromov-Witten invariants of $\mathbb{CP}^1$.  
Fay-type bilinear equations for $Z_{\mathrm{4D}}(\boldsymbol{T})$ 
can be derived from similar equations satisfied by $Z(\boldsymbol{t})$.  
The bilinear equations imply that $Z_{\mathrm{4D}}(\boldsymbol{T})$, too, 
is a tau function of the KP hierarchy.  These results are further 
extended to deformations $Z(\bst,s)$ and $Z_{\frD}(\bsT,s)$ 
by a discrete variable $s \in \mathbb{Z}$, which are shown to be 
tau functions of the 1D Toda hierarchy.  
\end{abstract}

\begin{flushleft}
2010 Mathematics Subject Classification: 
14N35, 
37K10, 
39A13 
\\
Key words: melting crystal model, quantum curve, KP hierarchy, 
Toda hierarchy, bilinear equation, Gromov-Witten theory
\end{flushleft}

\newpage

\section{Introduction}

The melting crystal model \cite{MNTT04} is a statistical model 
of 5D SUSY Yang-Mills theory on $\RR^4\times S^1$ \cite{Nekrasov96} 
in the self-dual background \cite{Nekrasov02,NO03}.  
The partition function is a sum over all possible shapes 
(represented by plane partitions) of 3D Young diagrams. 
The name of the model originates in the physical interpretation 
of the complement of a 3D Young diagram in the positive octant 
of $\RR^3$ as a melting crystal corner. 
By the method of diagonal slicing \cite{ORV03}, 
the partition function can be converted to a sum 
over ordinary partitions.  This sum reproduces 
the Nekrasov partition function of instantons 
in 5D SUSY Yang-Mills theory. 

In the previous work \cite{NT07}, we studied the simplest case 
that amounts to $U(1)$ gauge theory.  The main subject 
was an integrable structure of the partition function 
deformed by an infinite number of external potentials.  
The deformed partition function $Z(\bst,s)$ 
depends on the coupling constants $\bst = (t_1,t_2,\ldots)$ 
of those potentials and a discrete variable $s \in \ZZ$. 
We proved, with the aid of symmetries of a quantum torus algebra, 
that $Z(\bst,s)$ is essentially a tau function of 
the 1D Toda hierarchy \cite{UT84}.  This result has been extended 
to some other types of melting crystal models \cite{Takasaki13,Takasaki14}. 

An open problem raised therein is to find an appropriate prescription 
for the 4D limit as the radius $R$ of $S^1$ in $\RR^4\times S^1$ 
tends to $0$.  The melting crystal model of $U(1)$ gauge theory 
has two parameters $q,Q$.  By setting these parameters 
in a particular $R$-dependent form and letting $R \to 0$, 
the undeformed partition function $Z = Z(\bszero,0)$ 
converges to the 4D Nekrasov function $Z_{\frD}$ \cite{Nekrasov02,NO03}.  
It is not so straightforward to achieve the 4D limit 
of the deformed partition function $Z(\bst,s)$. 
In a naive prescription \cite{NT07}, 
all coupling constants other than $t_1$ decouple 
from $Z(\bst,s)$ in the limit as $R \to 0$.  
On the other hand, a deformation $Z_{\frD}(\bst,s)$ of $Z_{\frD}$ 
by an infinite number of external potentials 
is proposed in the literature \cite{MN06}. 
What we need is an $R$-dependent transformation 
$\bst = \bst(\bsT,R)$ of $\bst$ to a new set of 
coupling constants $\bsT = (T_1,T_2,\ldots)$ 
such that $Z(\bst(\bsT,R),s)$ converges to $Z_{\frD}(\bsT,s)$ 
as $R \to 0$.  

As a warm-up for tackling this problem, we consider 
the so called quantum spectral curves. This is inspired 
by the work of Dunin-Barkowski et al. \cite{DBMNPS13} 
on Gromov-Witten theory of $\CC\PP^1$.  
They derived a quantum spectral curve of $\CC\PP^1$ 
in the perspective of topological recursion \cite{DBOSS12,DBSS12,NS11}.   
Since the deformed 4D Nekrasov function $Z_{\frD}(\bsT,s)$ 
of $U(1)$ gauge theory coincides with a generating function 
of all genus Gromov-Witten invariants of $\CC\PP^1$ \cite{LMN03,OP02}, 
it will be natural to reconsider this issue 
from the point of view of the melting crystal model. 

Recently, we proposed a new approach to quantum mirror curves 
in topological string theory \cite{TN16}. 
This approach is based on the notions of Kac--Schwarz operators 
\cite{KS91,Schwarz91} and generating operators 
\cite{Alexandrov1404,ALS1512} in the KP hierarchy \cite{SS83,SW85}. 
$Z(\bst,s)$ may be thought of as a set of KP tau functions 
labelled by $s \in \ZZ$.  In particular, 
$Z(\bst) = Z(\bst,0)$ resembles the tau functions 
in topological string theory.  
Our method developed for topological string theory 
can be applied to $Z(\bst)$ to derive a quantum spectral curve.  
This quantum curve is represented by a $q$-difference equation 
for a single-variate specialization $Z(x)$ of $Z(\bst)$.  

We show that this $q$-difference equation 
turns into the quantum spectral curve of Dunin-Barkowski et al. 
as $R \to 0$.  To this end, we choose 
an $R$-dependent transformation $x = x(X,R)$ of $x$ 
to a new variable $X$.  $Z(x(X,R))$ thereby converges 
to a function $Z_{\frD}(X)$ as $R \to 0$.  
$Z_{\frD}(X)$ is exactly the function considered 
by Dunin-Barkowski et al. and shown to satisfy 
a difference equation that represents 
their quantum spectral curve.  Although being rather simple, 
the transformation $x = x(X,R)$ is indispensable 
to achieve this limit. 

This result can be further extended to the multi-variate 
partition function $Z(\bst)$.  We construct an $R$-dependent 
transformation $\bst = \bst(\bsT,R)$ of the coupling constants $\bst$, 
and show that $Z(\bst(\bsT,R))$ does converge to the correct 
4D counterpart $Z_{\frD}(\bsT) = Z_{\frD}(\bsT,0)$ as $R \to 0$. 
The same transformation of the coupling constants can be used 
to derive $Z_{\frD}(\bsT,s)$ from $Z(\bst,s)$ as well. 
The construction of this transformation is far more complicated 
than the transformation $x = x(X,R)$ of the variable $x$.  
The complexity stems from a structural difference 
in the external potentials of $Z(\bst)$ and $Z_{\frD}(\bsT)$, 
hence being unavoidable. We believe that our choice is 
nevertheless the most natural  among all possible prescriptions. 

As a byproduct of this prescription of 4D limit, 
we can show that $Z_{\frD}(\bsT)$ satisfies a set of 
Fay-type bilinear equations.  These bilinear equations 
characterize all tau functions of the KP hierarchy 
\cite{AvM92,SS83,TT95}. This implies that $Z_{\frD}(\bsT)$, 
too, is a tau function of the KP hierarchy.  
Moreover, by a similar characterization of tau functions 
of the Toda hierarchy \cite{Takasaki07,Teo06}, 
we can deduce that $Z_{\frD}(\bsT,s)$ is a tau function 
of the 1D Toda hierarchy.  Actually, these functions 
are known as generating functions of all-genus 
Gromov-Witten invariants of $\CC\PP^1$ \cite{LMN03,OP02}.  
The Toda conjecture on Gromov-Witten theory of $\CC\PP^1$ 
\cite{DZ04,Getzler01,Milanov06,Pandharipande02} 
can be thus explained in a different perspective.  

This paper is organized as follows.  
Section 2 is a brief review of the melting crystal model. 
Combinatorial and fermionic expressions of the deformed 
partition function $Z(\bst)$ are introduced.  
The fermionic expression is further converted to a form 
that fits into the method of our work on quantum mirror curves 
of topological string theory \cite{TN16}. 
Section 3 presents the quantum spectral curve 
of the melting crystal model.  
The single-variate specialization $Z(x)$ of $Z(\bst)$ 
is introduced, and shown to satisfy a $q$-difference equation 
representing the quantum spectral curve.  
The computations are mostly parallel to the case of 
topological string theory. 
Section 4 deals with the issue of 4D limit.  
The $R$-dependent transformations $x = x(X,R)$ 
and $\bst = \bst(\bsT,R)$ are introduced, 
and $Z(x(X,R))$ and $Z(\bst(\bsT,R))$ are shown 
to converge as $R \to 0$.  The functions $Z_{\frD}(X)$ 
and $Z_{\frD}(\bsT)$ obtained in this limit are computed 
explicitly.  The difference equation for $Z_{\frD}(X)$ 
is derived, and confirmed to agree with the result 
of Dunin-Barkowski et al. \cite{DBMNPS13}. 
Section 5 is devoted to Fay-type bilinear equations. 
A three-term bilinear equation plays a central role here.  
The bilinear equation for $Z(\bst)$ is shown to turn 
into a similar bilinear equation for $Z_{\frD}(\bsT)$ 
as $R \to 0$.  The corresponding results for 
$Z(\bst,s)$ and $Z_{\frD}(\bsT,s)$ are presented in Appendix.  
Section 6 concludes this paper.

\section{Melting crystal model}

\subsection{Partition function of 3D Young diagrams}

The partition function of the simplest melting crystal model 
with a single parameter $q$ is the sum 
\beq
  Z = \sum_{\pi\in\calP\calP} q^{|\pi|} 
  \label{Z-PPsum}
\eeq
of the Boltzmann weight $q^{|\pi|}$ over the set $\calP\calP$ 
of all plane partitions.  The plane partition $\pi 
= (\pi_{ij})_{i,j=1}^\infty$ represents a 3D Young diagram 
that consists of stacks of unit cubes of height $\pi_{ij}$ 
put on the unit squares $[i-1,i]\times[j-1,j]$ of the plane. 
$|\pi|$ denotes the volume of the 3D Young diagram:
\[
  |\pi| = \sum_{i,j=1}^\infty \pi_{ij}. 
\]

By the method of diagonal slicing \cite{ORV03}, 
one can convert the sum (\ref{Z-PPsum}) over $\calP\calP$ 
to a sum over the set $\calP$ of all ordinary partitions 
$\lambda = (\lambda_i)_{i=1}^\infty$ as 
\beq
  Z = \sum_{\lambda\in\calP}s_\lambda(q^{-\rho})^2, 
  \label{Z-Psum}
\eeq
where $s_\lambda(q^{-\rho})$ is the special value 
(a kind of principal specialization) of the infinite-variate 
Schur function $s_\lambda(\bsx)$, $\bsx = (x_1,x_2,\ldots)$, at 
\beqnn
   \bsx = q^{-\rho} = (q^{1/2},q^{3/2},\ldots,q^{i-1/2},\ldots). 
\eeqnn
Moreover, by the Cauchy identities of Schur functions \cite{Mac-book}, 
one can rewrite the sum (\ref{Z-Psum}) into an infinite product: 
\beqnn
  Z = \prod_{i,j=1}^\infty(1 - q^{i+j-1})^{-1} 
    = \prod_{n=1}^\infty(1 - q^n)^{-n}. 
\eeqnn
This infinite product is known as the MacMahon function. 

The special value $s_\lambda(q^{-\rho})$ has the hook-length formula 
\cite{Mac-book} 
\beq
  s_\lambda(q^{-\rho}) 
  = \frac{q^{-\kappa(\lambda)/4}}
    {\prod_{(i,j)\in\lambda}(q^{-h(i,j)/2} - q^{h(i,j)/2})}, 
  \label{q-hook-formula}
\eeq
where $\kappa(\lambda)$ is the commonly used notation 
\beqnn
  \kappa(\lambda) 
  = 2\sum_{(i,j)\in\lambda}(j - i) 
  = \sum_{i=1}^\infty
    \left((\lambda_i-i+1/2)^2 - (-i+1/2)^2\right), 
\eeqnn
and $h(i,j)$ denote the the hook length 
\beqnn
  h(i,j) = (\lambda_i-j) + (\tp{\lambda}_j-i) + 1 
\eeqnn
of the cell $(i,j)$ in the Young diagram of shape $\lambda$.  
$\tp{\lambda}_i$'s are the parts of the conjugate partition 
$\tp{\lambda}$ that represents the transposed Young diagram. 
Thus $s_\lambda(q^{-\rho})$ is a $q$-deformation of the number 
\beq
  \frac{\dim\lambda}{|\lambda|!} 
  = \frac{1}{\prod_{(i,j)\in\lambda}h_{(i,j)}}, 
  \label{hook-formula}
\eeq
where $|\lambda| = \sum_{i=1}^\infty\lambda_i$,  
and $\dim\lambda$ is the number of standard tableau 
of shape $\lambda$.  (\ref{hook-formula}) plays 
a central role in Gromov-Witten/Hurwitz theory 
of $\CC\PP^1$ \cite{Okounkov00,OP02,Pandharipande02}.

\subsection{Deformation by external potentials}

We now introduce a parameter $Q$ and an infinite set 
of coupling constants $\bst = (t_1,t_2,\ldots)$, 
and deform (\ref{Z-Psum}) as 
\beq
  Z(\bst) = \sum_{\lambda\in\calP}s_\lambda(q^{-\rho})^2
            Q^{|\lambda|}e^{\phi(\bst,\lambda)}, 
  \label{Z(t)-Psum}
\eeq
where 
\beqnn
  \phi(\bst,\lambda) = \sum_{k=1}^\infty t_k\phi_k(\lambda).
\eeqnn
The external potentials $\phi_k(\lambda)$ are defined as 
\beq
  \phi_k(\lambda) 
  = \sum_{i=1}^\infty\left(q^{k(\lambda_i-i+1)} - q^{k(-i+1)}\right). 
  \label{phi_k}
\eeq
The sum on the right hand side of (\ref{phi_k}) is a finite sum 
because only a finite number of $\lambda_i$'s are non-zero.  
These potentials are $q$-analogues of the so called 
Casimir invariants of the infinite symmetric group $S_\infty$, 
which we shall encounter in the 4D limit.  

The following fact is a consequence of our previous work 
on the melting crystal model \cite{NT07}.  
This fact is by no means obvious from the definition 
of $Z(\bst)$, and explained with the aid of algebraic structures 
hidden behind a fermionic expression of this partition function. 
Moreover, we shall need further refinements of this statement 
to consider an associated quantum spectral curve.  

\begin{theorem}
$Z(\bst)$ is a tau function of the KP hierarchy 
with time variables $\bst = (t_1,t_2,\ldots)$.  
\end{theorem}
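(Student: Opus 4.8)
The plan is to recast the combinatorial sum (\ref{Z(t)-Psum}) as a single vacuum expectation value of free fermions and to bring it into the Sato normal form $\langle 0|e^{\sum_k t_kJ_k}g|0\rangle$ that characterizes KP tau functions, building on the integrable structure established in \cite{NT07}. First I would pass to the charge-zero fermionic Fock space, where partitions $\lambda$ label an orthonormal basis $|\lambda\rangle$. Each ingredient of the summand has a transparent operator realization: the two factors $s_\lambda(q^{-\rho})$ are produced by the vertex operators $\Gamma_{\pm}(q^{-\rho})$ acting on the vacuum, $Q^{|\lambda|}$ is the eigenvalue of $Q^{L_0}$, and $e^{\phi(\bst,\lambda)}$ is the eigenvalue of $\exp\bigl(\sum_k t_kH_k\bigr)$, where the diagonal operators $H_k = \sum_{n\in\ZZ}q^{kn}{:}\psi_{-n}\psi^*_n{:}$ have eigenvalue exactly $\phi_k(\lambda)$ of (\ref{phi_k}) on $|\lambda\rangle$. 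Since $H_k$ and $L_0$ are diagonal, inserting $\sum_\lambda|\lambda\rangle\langle\lambda|$ collapses the sum, and $Z(\bst)$ becomes $\langle 0|\Gamma_+(q^{-\rho})\,Q^{L_0}e^{\sum_k t_kH_k}\,\Gamma_-(q^{-\rho})|0\rangle$.

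The substance of the argument — and the step I expect to be the main obstacle — is that the deformation operators $H_k$ lie in the Cartan part of $\widehat{\mathfrak{gl}}(\infty)$ and so do not by themselves generate KP flows, which are instead produced by the off-diagonal current modes $J_k = \sum_n{:}\psi_{-n}\psi^*_{n+k}{:}$. Here I would invoke the shift symmetry of the quantum torus algebra from \cite{NT07}: conjugation by a suitable $GL(\infty)$ element intertwines the diagonal flow $e^{\sum_k t_kH_k}$ with the standard current flow $e^{\sum_k t_kJ_k}$, absorbing the fixed factors $\Gamma_\pm(q^{-\rho})$ and $Q^{L_0}$ into a single $\bst$-independent operator $g\in GL(\infty)$. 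Verifying this intertwining, and tracking how the $q$-dependent dressing recombines, is the delicate calculation; everything else is bookkeeping.

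Once $Z(\bst)$ is written as $\langle 0|e^{\sum_k t_kJ_k}g|0\rangle$ with $g$ independent of $\bst$, the conclusion is immediate from Sato theory \cite{SS83,SW85}, since $g\in GL(\infty)$ preserves the Pl\"ucker relations of the Sato Grassmannian and hence the expectation value is a KP tau function in $\bst$. Alternatively, the theorem follows quickly from the Toda result of \cite{NT07}: $Z(\bst,s)$ is a tau function of the 1D Toda hierarchy \cite{UT84}, and freezing the discrete variable at $s=0$ reduces a Toda tau function to a KP tau function in the remaining continuous times $\bst$, which is precisely $Z(\bst)=Z(\bst,0)$. The value of carrying out the fermionic route explicitly is that the operator $g$ and the shift symmetry are exactly the data refined in the next section to extract the quantum spectral curve.
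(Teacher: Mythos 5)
Your proposal is correct and is essentially the paper's own treatment: your ``alternative'' route is precisely how the paper proves this theorem (citing \cite{NT07} that $Z(\bst,s)$ is, up to a simple multiplier, a 1D Toda tau function, hence $Z(\bst)=Z(\bst,0)$ is a KP tau function), while your primary route --- writing $Z(\bst)$ as $\langle 0|\Gamma_{+}(q^{-\rho})Q^{L_0}e^{H(\bst)}\Gamma_{-}(q^{-\rho})|0\rangle$ and using the quantum-torus shift symmetries of \cite{NT07} to pass to the Sato normal form $\langle 0|e^{\sum_k t_kJ_k}g|0\rangle$ --- is exactly the refinement the paper records as its Theorem 2 and the computations following it. Like the paper, you defer the delicate intertwining of the diagonal flow $e^{H(\bst)}$ with the current flow to \cite{NT07}, so there is no gap.
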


Actually, $Z(\bst)$ is a member of a set 
of functions $Z(\bst,s)$, $s \in \ZZ$, considered 
in our previous work: 
\beq
  Z(\bst,s) = \sum_{\lambda\in\calP}s_\lambda(q^{-\rho})^2 
              Q^{|\lambda|+s(s+1)/2}e^{\phi(\bst,s,\lambda)},
  \label{Z(t,s)-Psum}
\eeq
where
\begin{gather*}
  \phi(\bst,s,\lambda) = \sum_{k=1}^\infty t_k\phi_k(s,\lambda),\\
  \phi_k(s,\lambda) 
  = \sum_{k=1}^\infty\left(q^{k(\lambda_i-i+1+s)} - q^{k(-i+1+s)}\right)
     + \frac{1-q^{ks}}{1-q^k}q^k.
\end{gather*}
The $s$-dependent formulation stems from a fermionic interpretation 
of $Z(\bst)$ that we shall review below.  
The external potentials $\phi_k(s,\lambda)$ are obtained 
by rearrangement of terms of the formal expression 
\beqnn
  \phi_k(s,\lambda) 
  = \sum_{i=1}^\infty q^{k(\lambda_i-i+1+s)} - \sum_{i=1}^\infty q^{k(-i+1)}. 
\eeqnn
This explains the origin of the last term of $\phi(s,\lambda)$: 
\begin{align}
  \frac{1-q^{ks}}{1-q^k}q^k 
  &= \sum_{i=1}^\infty q^{k(-i+1+s)} - \sum_{i=1}^\infty q^{k(-i+1)} \notag\\
  &= \begin{cases}
    q^k + \cdots + q^{ks}& \text{for $s > 0$},\\
    0 & \text{for $s = 0$},\\
    - 1 - q^{-k} - \cdots - q^{k(s+1)}& \text{for $s < 0$}. 
    \end{cases}
  \label{phi-corr}
\end{align}
It is shown in our previous work \cite{NT07} that $Z(\bst,s)$ 
is essentially (i.e., up to a simple multiplier and rescaling 
of the time variables) a tau function of the 1D Toda hierarchy.  
This implies, in particular, that $Z(\bst,s)$ is also 
a collection of tau functions of the KP hierarchy 
labelled by $s$ \cite{UT84}.

\subsection{Fermionic expression of partition function}

Let $\psi_n,\psi^*_n$, $n\in\ZZ$, be the creation--annihilation 
operators\footnote{For the sake of convenience, 
as in our previous work \cite{NT07}, we label 
these operators with integers rather than half-integers. 
The free fermion fields are defined as 
$\psi(z) = \sum_{n\in\ZZ}\psi_nz^{-n-1}$ 
and $\psi^*(z) = \sum_{n\in\ZZ}\psi^*_nz^{-n}$.} 
of 2D charged free fermion theory with the anti-commutation relations 
\beqnn
  \psi_m\psi^*_n + \psi^*_n\psi_m = \delta_{m+n,0},\quad 
  \psi_m\psi_n + \psi_n\psi_m = 0,\quad 
  \psi^*_m\psi^*_n + \psi^*_n\psi^*_m = 0, 
\eeqnn
and $|0\rangle$, $\langle 0|$, $s \in \ZZ$, 
the vacuum vectors of the fermionic Fock and dual Fock spaces 
that satisfy the vacuum conditions 
\begin{gather*}
  \psi^*_n|0\rangle = 0 \quad \text{for $n > 0$},\quad 
  \psi_n|0\rangle = 0 \quad \text{for $n \ge 0$},\\
  \langle 0|\psi_n = 0 \quad \text{for $n < 0$},\quad 
  \langle 0|\psi^*_n = 0 \quad \text{for $n \le 0$}. 
\end{gather*}
The charge-$0$ sectors of the Fock spaces are spanned 
by the excited states $|\lambda\rangle$, $\langle\lambda|$, 
$\lambda\in\calP$: 
\begin{align*}
  |\lambda\rangle &= \psi_{-\lambda_1}\psi_{-\lambda_2+1}\cdots
    \psi_{-\lambda_n+n-1}\psi^*_{-n+1}\cdots\psi^*_{-1}\psi^*_0|0\rangle,\\
  \langle \lambda| &= \langle 0|\psi_0\psi_1\cdots\psi_{n-1}
    \psi^*_{\lambda_n-n+1}\cdots\psi^*_{\lambda_2-1}\psi^*_{\lambda_1},
\end{align*}
where $n$ is chosen so that $\lambda_i = 0$ for $i > n$.  
In particular, $|\emptyset\rangle$ and $\langle\emptyset|$ 
agree with the vacuum states.    
The charge-$s$ sectors of the Fock spaces are spanned 
by similar vectors $|s,\lambda\rangle$, $\langle s,\lambda|$, 
$\lambda \in \calP$. 

The fermionic expression of the aforementioned partition functions 
employs the normally ordered fermion bilinears 
\begin{gather*}
  L_0 = \sum_{n\in\ZZ}n{:}\psi_{-n}\psi^*_n{:},\quad 
  K = \sum_{n\in\ZZ}(n-1/2)^2{:}\psi_{-n}\psi^*_n{:},\\
  H_k = \sum_{n\in\ZZ}q^{kn}{:}\psi_{-n}\psi^*_n{:},\quad 
  J_k = \sum_{n\in\ZZ}{:}\psi_{k-n}\psi^*_n{:},\\
  {:}\psi_m\psi^*_n{:} 
  = \psi_m\psi^*_n - \langle 0|\psi_m\psi^*_n|0\rangle, 
\end{gather*}
the vertex operators \cite{ORV03,YB08}
\beqnn
  \Gamma_{\pm k}(x) 
  = \exp\left(\sum_{k=1}^\infty\frac{x^k}{k}J_{\pm k}\right),\quad 
  \Gamma'_{\pm k}(x) 
  = \exp\left(- \sum_{k=1}^\infty\frac{(-x)^k}{k}J_{\pm k}\right),\quad 
\eeqnn
and their multi-variate extensions 
\beqnn
  \Gamma_{\pm k}(x_1,x_2,\ldots) = \prod_{i\ge 1}\Gamma_{\pm k}(x_i),\quad
  \Gamma'_{\pm k}(x_1,x_2,\ldots) = \prod_{i\ge 1}\Gamma'_{\pm k}(x_i). 
\eeqnn
The action of these operators on the fermionic Fock space 
leaves the charge-$0$ sector invariant.  
$L_0$, $K$ and $H_k$ are diagonal with respect to $|\lambda\rangle$'s: 
\beq
  \langle\lambda|L_0|\mu\rangle = |\lambda|\delta_{\lambda\mu},\quad 
  \langle\lambda|K|\mu\rangle = \kappa(\lambda)\delta_{\lambda\mu},\quad
  \langle\lambda|H_k|\mu\rangle = \phi_k(\lambda)\delta_{\lambda\mu}. 
  \label{<..L_0..>etc}
\eeq
The matrix elements of the vertex operators are 
the skew Schur functions $s_{\lambda/\mu}(\bsx)$, 
$\bsx = (x_1,x_2,\ldots)$: 
\begin{gather}
  \langle\lambda|\Gamma_{-}(\bsx)|\mu\rangle 
  = \langle\mu|\Gamma_{+}(\bsx)|\lambda\rangle 
  = s_{\lambda/\mu}(\bsx),
  \label{<..Gamma..>}\\
  \langle\lambda|\Gamma'_{-}(\bsx)|\mu\rangle 
  = \langle\mu|\Gamma'_{+}(\bsx)|\lambda\rangle 
  = s_{\tp{\lambda}/\tp{\mu}}(\bsx). 
  \label{<..Gamma'..>}
\end{gather}

One can use these building blocks to rewrite 
the combinatorial definition (\ref{Z(t)-Psum}) of $Z(\bst)$ as 
\beq
  Z(\bst) = \langle 0|\Gamma_{+}(q^{-\rho})Q^{L_0}
            e^{H(\bst)}\Gamma_{-}(q^{-\rho})|0\rangle, 
  \label{Z(t)=<..e^H..>}
\eeq
where 
\beqnn
  H(\bst) = \sum_{k=1}^\infty t_kH_k. 
\eeqnn
As shown in our previous work with the aid of 
symmetries of a quantum torus algebra \cite{NT07} , 
(\ref{Z(t)=<..e^H..>}) can be converted to the following form.  
This implies that $Z(\bst)$ is a tau function of the KP hierarchy. 

\begin{theorem}
\beq
  Z(\bst) = \exp\left(\sum_{k=1}^\infty\frac{q^kt_k}{1-q^k}\right) 
    \langle 0|\exp\left(\sum_{k=1}^\infty(-1)^kq^{k/2}t_kJ_k\right)
    g_1|0\rangle, 
  \label{Z(t)=<..g_1..>}
\eeq
where 
\beq
  g_1 = q^{K/2}\Gamma_{-}(q^{-\rho})\Gamma_{+}(q^{-\rho})Q^{L_0}
        \Gamma_{-}(q^{-\rho})\Gamma_{+}(q^{-\rho})q^{K/2}. 
  \label{g_1}
\eeq
\end{theorem}

One can rewrite (\ref{Z(t)=<..g_1..>}) further to clarify 
its characteristic as a tau function of the KP hierarchy.  
Firstly, the exponential prefactor on the right hand side 
can be taken inside the vev as 
\beq
  Z(\bst) = \langle 0|\exp\left(\sum_{k=1}^\infty(-1)^kq^{k/2}t_kJ_k\right)
    g_2|0\rangle 
  \label{Z(t)=<..g_2..>}
\eeq
with 
\beqnn
  g_2 = \exp\left(\sum_{k=1}^\infty\frac{(-1)^kq^{k/2}}{k(1-q^k)}J_{-k}\right)g_1.
\eeqnn
This is a consequence of the commutation relation 
\beqnn
  [J_m,J_n] = m\delta_{m+n,0}
\eeqnn
among $J_n$'s that span the $U(1)$ current (or Heisenberg) algebra. 
Remarkably, the operator generated in front of $g_1$, too, 
is related to a vertex operator as 
\beqnn
  \exp\left(\sum_{k=1}^\infty\frac{(-1)^kq^{k/2}}{k(1-q^k)}J_{-k}\right)
  = \Gamma'_{-}(q^{-\rho})^{-1}. 
\eeqnn
Thus $g_2$ can be expressed as 
\beq
  g_2 = \Gamma'_{-}(q^{-\rho})^{-1}
        q^{K/2}\Gamma_{-}(q^{-\rho})\Gamma_{+}(q^{-\rho})Q^{L_0}
        \Gamma_{-}(q^{-\rho})\Gamma_{+}(q^{-\rho})q^{K/2}. 
  \label{g_2}
\eeq
Secondly, the multipliers $(-1)^kq^{k/2}$ of $t_k$'s can be removed 
by the scaling relation 
\beqnn
  \sum_{k=1}^\infty (-1)^kq^{k/2}t_kJ_k
  = (-q^{1/2})^{-L_0}\cdot\sum_{k=1}^\infty t_kJ_k\cdot(-q^{1/2})^{L_0}.
\eeqnn
(\ref{Z(t)=<..g_2..>}) thereby turns into the more standard expression 
\beq
  Z(\bst) = \langle 0|\exp\left(\sum_{k=1}^\infty t_kJ_k\right)g_3|0\rangle, 
  \quad g_3= (-q^{1/2})^{L_0}g_2, 
  \label{Z(t)=<..g_3..>}
\eeq
as a tau function of the KP hierarchy \cite{JM83,MJD-book}. 

\begin{remark}
As we shall see in the next section, 
one can simplify the operator $g_2$ to 
\beq
  g = \Gamma'_{-}(q^{-\rho})^{-1}q^{K/2}\Gamma_{-}(q^{-\rho})\Gamma_{-}(Qq^{-\rho}) 
  \label{g}
\eeq
without changing the associated tau function of the KP hierarchy. 
$g_1$ is defined in the somewhat complicated form (\ref{g_1}) 
to enjoy the algebraic relations 
\beq
  J_kg_1 = g_1J_k \quad \text{for $k = 1,2,\ldots$}. 
  \label{g_1-symmetry}
\eeq
These relations ensure that the associated tau function 
of the 2D Toda hierarchy reduces to a tau function 
of the 1D Toda hierarchy \cite{NT07}. 
\end{remark}

\begin{remark}
In the previous work \cite{NT07}, we used the operator 
\beqnn
  W_0 = \sum_{n\in\ZZ}n^2{:}\psi_{-n}\psi^*_n{:}
\eeqnn
in place of $K$.  Accordingly, the fermionic expression 
of the partition functions presented therein takes 
a slightly different form.  This does not affect 
the essential part of the fermionic expression. 
\end{remark}

\section{Quantum spectral curve}

\subsection{Single-variate specialization}

Let $Z(x)$ denote the single-variate specialization of $Z(\bst)$ 
obtained by substituting 
\beq
  t_k = - \frac{q^{-k/2}x^k}{k},\quad k = 1,2,\ldots. 
  \label{t-x-rel}
\eeq
The combinatorial definition (\ref{Z(t)-Psum}) of $Z(\bst)$ 
and its fermionic expressions (\ref{Z(t)=<..g_1..>}) and 
(\ref{Z(t)=<..g_2..>}) are accordingly specialized 
as follows. 

\begin{lemma}
\beq
  Z(x) = \sum_{\lambda\in\calP}s_\lambda(q^{-\rho})^2Q^{|\lambda|}
     \prod_{i=1}^\infty\frac{1 - q^{\lambda_i-i+1/2}x}{1 - q^{-i+1/2}x}.
  \label{Z(x)-Psum}
\eeq
\end{lemma}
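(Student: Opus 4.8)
The plan is to start from the combinatorial definition (\ref{Z(t)-Psum}) and simply substitute the specialization (\ref{t-x-relation}) into the linear potential $\phi(\bst,\lambda) = \sum_{k=1}^\infty t_k\phi_k(\lambda)$, reducing everything to the evaluation of the factor $e^{\phi(\bst,\lambda)}$ for each fixed $\lambda$. Since the weight $s_\lambda(q^{-\rho})^2Q^{|\lambda|}$ is untouched by the substitution, it suffices to show that under (\ref{t-x-relation}) this exponential becomes the advertised product $\prod_{i\ge1}(1-q^{\lambda_i-i+1/2}x)/(1-q^{-i+1/2}x)$, and then to reinsert it into the sum over $\lambda$.

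First I would plug $t_k = -q^{-k/2}x^k/k$ together with the definition (\ref{phi_k}) of $\phi_k(\lambda)$ to obtain
\[
  \phi(\bst,\lambda) = -\sum_{k=1}^\infty\sum_{i=1}^\infty
  \frac{x^k}{k}\left(q^{k(\lambda_i-i+1/2)} - q^{k(-i+1/2)}\right),
\]
the half-integer shifts arising from absorbing the factor $q^{-k/2}$ into $q^{k(\lambda_i-i+1)}$ and $q^{k(-i+1)}$. The key structural observation is that, for any fixed $\lambda$, only finitely many $\lambda_i$ are nonzero, so the summand vanishes identically for all $i$ beyond some $n$; the $i$-summation is therefore finite and the two summations may be interchanged without comment.

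Next I would perform the $k$-summation termwise in $i$ using the elementary series $-\log(1-u)=\sum_{k\ge1}u^k/k$. For each $i$ this gives $\sum_{k\ge1}(x^k/k)\,q^{k(\lambda_i-i+1/2)} = -\log(1-q^{\lambda_i-i+1/2}x)$ and, likewise, $\sum_{k\ge1}(x^k/k)\,q^{k(-i+1/2)} = -\log(1-q^{-i+1/2}x)$, so the $i$-th contribution to $\phi(\bst,\lambda)$ is $\log\!\big((1-q^{\lambda_i-i+1/2}x)/(1-q^{-i+1/2}x)\big)$. Exponentiating and taking the product over $i$—the factors being trivially $1$ for $i>n$—yields the per-$\lambda$ identity, which substituted into (\ref{Z(t)-Psum}) gives precisely (\ref{Z(x)-Psum}); no interchange with the outer sum over $\lambda$ is needed, since the computation is carried out term by term.

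The one point requiring care, and the main obstacle beyond routine computation, is convergence. With $0<q<1$ the denominator series $\sum_{k\ge1}(x^k/k)\,q^{k(-i+1/2)}$ converges only for $|x|<q^{\,i-1/2}$, a bound that shrinks as $i$ grows, so the individual logarithms need not converge simultaneously for a single value of $x$. The cleanest remedy is to read the identity first as an equality of formal power series in $x$, where the manipulation above is unambiguous, and then to note that the resulting finite product is a rational function of $x$ furnishing the analytic continuation; equivalently, one may restrict to $|x|$ small enough to handle the finitely many relevant indices $i\le n$ and invoke analyticity in $x$. Either route makes the per-$\lambda$ factorization exact, which is all that the lemma asserts.
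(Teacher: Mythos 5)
Your proposal is correct and follows essentially the same route as the paper: substitute the specialization $t_k = -q^{-k/2}x^k/k$ into $\phi(\bst,\lambda)$, recognize the logarithmic series $-\log(1-u)=\sum_{k\ge 1}u^k/k$ termwise in $i$, and exponentiate to get the per-$\lambda$ product. Your added discussion of convergence (the finiteness of the $i$-sum for fixed $\lambda$, and the formal-series or analytic-continuation reading) is a legitimate refinement of a point the paper passes over silently, but it does not change the argument.
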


\begin{proof}
Substituting (\ref{t-x-rel}) for $\phi(\bst,\lambda)$ yields 
\begin{align*}
  \phi(\bst,\lambda) 
  &= - \sum_{i,k=1}^\infty\frac{q^{-k/2}x^k}{k}
       \left(q^{k(\lambda_i-i+1)} - q^{k(-i+1)}\right)\\
  &= \sum_{i=1}^\infty\left(\log(1 - q^{\lambda_i-i+1/2}x) 
         - \log(1 - q^{-i+1/2}x)\right), 
\end{align*}
hence 
\beqnn
  e^{\phi(\bst,\lambda)} 
  = \prod_{i=1}^\infty\frac{1 - q^{\lambda_i-i+1/2}x}{1 - q^{-i+1/2}x}. 
\eeqnn
\end{proof}

\begin{lemma}
\beq
  Z(x) = \prod_{i=1}^\infty(1 - q^{i-1/2}x)
         \cdot\langle 0|\Gamma'_{+}(x)g_1|0\rangle 
       = \langle 0|\Gamma'_{+}(x)g_2|0\rangle. 
  \label{Z(x)=<..g_2..>}
\eeq
\end{lemma}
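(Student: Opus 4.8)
The plan is to obtain both displayed equalities by directly substituting the specialization (\ref{t-x-relation}) into the two fermionic expressions (\ref{Z(t)=<..g_1..>}) and (\ref{Z(t)=<..g_2..>}) of $Z(\bst)$ that are already established. The first equality will come from (\ref{Z(t)=<..g_1..>}) and the second from (\ref{Z(t)=<..g_2..>}); no new structural input is needed, only two elementary power-series identities in $x$. First I would treat the operator inside the vacuum expectation value. Under (\ref{t-x-relation}) the coefficient of $J_k$ becomes $(-1)^kq^{k/2}t_k = -(-1)^kx^k/k = -(-x)^k/k$, so that $\exp\left(\sum_{k=1}^\infty(-1)^kq^{k/2}t_kJ_k\right)$ collapses to $\exp\left(-\sum_{k=1}^\infty\frac{(-x)^k}{k}J_k\right)$, which is exactly $\Gamma'_{+}(x)$ by its definition. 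This is the same operator in both (\ref{Z(t)=<..g_1..>}) and (\ref{Z(t)=<..g_2..>}).

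Second, I would evaluate the scalar prefactor appearing in (\ref{Z(t)=<..g_1..>}). Substituting (\ref{t-x-relation}) gives $\sum_{k=1}^\infty\frac{q^kt_k}{1-q^k} = -\sum_{k=1}^\infty\frac{q^{k/2}x^k}{k(1-q^k)}$. On the other hand, expanding the logarithm of the target product and summing a geometric series in $i$ yields $\log\prod_{i=1}^\infty(1-q^{i-1/2}x) = -\sum_{k=1}^\infty\frac{x^k}{k}\sum_{i=1}^\infty q^{k(i-1/2)} = -\sum_{k=1}^\infty\frac{q^{k/2}x^k}{k(1-q^k)}$, where I use $\sum_{i\ge 1}q^{k(i-1/2)} = q^{k/2}/(1-q^k)$. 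The two expressions agree termwise, so the prefactor is precisely $\prod_{i=1}^\infty(1-q^{i-1/2}x)$. Combining this with the operator identity, substitution into (\ref{Z(t)=<..g_1..>}) gives the first asserted equality, and substitution into (\ref{Z(t)=<..g_2..>}) gives the second.

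The mutual consistency of the two forms is automatic and need not be checked separately: $g_2$ was obtained from $g_1$ precisely by absorbing the prefactor $\exp\left(\sum_{k=1}^\infty\frac{(-1)^kq^{k/2}}{1-q^k}J_{-k}\right) = \Gamma'_{-}(q^{-\rho})^{-1}$ into $g_1$ through the Heisenberg commutation relations, which is exactly the product $\prod_{i=1}^\infty(1-q^{i-1/2}x)$ reinterpreted operatorially. The main obstacle is essentially absent: there is no deep step, and the whole content is the geometric-series bookkeeping together with the recognition of $\Gamma'_{+}(x)$. The only point that genuinely demands care is treating the infinite product and the double sums either as convergent objects (for $|q|<1$ and $x$ in a suitable domain) or as formal power series in $x$, so that interchanging the $i$- and $k$-summations in the computation of the prefactor is legitimate.
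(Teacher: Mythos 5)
Your proposal is correct and follows essentially the same route as the paper: the paper's proof likewise substitutes (\ref{t-x-relation}) into both (\ref{Z(t)=<..g_1..>}) and (\ref{Z(t)=<..g_2..>}), identifies $\exp\bigl(\sum_{k=1}^\infty(-1)^kq^{k/2}t_kJ_k\bigr)$ with $\Gamma'_{+}(x)$, and evaluates the scalar prefactor as $\prod_{i=1}^\infty(1-q^{i-1/2}x)$ via the same geometric-series computation. Your closing remark on the consistency of the two forms via $\Gamma'_{-}(q^{-\rho})^{-1}$ is a sound observation that the paper leaves implicit.
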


\begin{proof}
Substituting (\ref{t-x-rel}) in (\ref{Z(t)=<..g_1..>}) 
and (\ref{Z(t)=<..g_2..>}) yields
\beqnn
  \exp\left(\sum_{k=1}^\infty\frac{q^kt_k}{1-q^k}\right) 
  = \exp\left(- \sum_{i,k=1}^\infty\frac{q^{k(i-1/2)}x^k}{k}\right) 
  = \prod_{i=1}^\infty(1 - q^{i-1/2}x) 
\eeqnn
(cf. the computation in the proof of the previous lemma) and 
\beqnn
  \exp\left(\sum_{k=1}^\infty(-1)^kq^{k/2}t_kJ_k\right)
  = \exp\left(- \sum_{k=1}^\infty\frac{(- x)^k}{k}J_k\right)
  = \Gamma'_{+}(x). 
\eeqnn
\end{proof}

As we shall see in the next section, 
the combinatorial expression (\ref{Z(x)-Psum}) of $Z(x)$ 
has a desirable form from which one can derive the equation 
of quantum curve of Dunin-Barkowski et al. \cite{DBMNPS13}. 
To apply the method of our previous work \cite{TN16}, 
however, it is more convenient to have $\Gamma_{+}(x)$ 
rather than $\Gamma'_{+}(x)$ in the fermionic expression 
(\ref{Z(x)=<..g_2..>}) of $Z(x)$.  
This problem can be settled by the following 
transformation rule of matrix elements of fermionic operators 
under conjugation of partitions \cite{YB08}: 

\begin{lemma}
\label{conj-lemma}
\begin{gather*}
  \langle\lambda|L_0|\lambda\rangle 
  = \langle\tp{\lambda}|L_0|\tp{\lambda}\rangle,\quad 
  \langle\lambda|K|\lambda\rangle 
  = - \langle\tp{\lambda}|K|\tp{\lambda}\rangle,\\
  \langle\lambda|\Gamma_{\pm}(\bsx)|\mu\rangle 
  = \langle\tp{\lambda}|\Gamma'_{\pm}(\bsx)|\tp{\mu}\rangle. 
\end{gather*}
\end{lemma}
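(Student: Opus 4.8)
The plan is to prove the three identities one by one, in each case reducing the claim to data already assembled in Section 2.3 together with the single elementary fact that conjugation of partitions is an involution preserving the cell count, i.e.\ $\tp{(\tp{\lambda})} = \lambda$ and $|\tp{\lambda}| = |\lambda|$. No genuinely new identity beyond (\ref{<..L_0..>etc}) and (\ref{<..Gamma..>})--(\ref{<..Gamma'..>}) should be required.

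For the $L_0$ relation I would invoke the first equality in (\ref{<..L_0..>etc}) to write $\langle\lambda|L_0|\lambda\rangle = |\lambda|$ and likewise $\langle\tp{\lambda}|L_0|\tp{\lambda}\rangle = |\tp{\lambda}|$, and then conclude from $|\tp{\lambda}| = |\lambda|$. For the $K$ relation, the third equality in (\ref{<..L_0..>etc}) reduces the claim to $\kappa(\tp{\lambda}) = -\kappa(\lambda)$, which I would read off directly from the geometric expression $\kappa(\lambda) = 2\sum_{(i,j)\in\lambda}(j-i)$: conjugation matches the cell $(i,j)$ of $\lambda$ with the cell $(j,i)$ of $\tp{\lambda}$, so under this bijection the summand $j-i$ changes sign, giving $\kappa(\tp{\lambda}) = 2\sum_{(i,j)\in\lambda}(i-j) = -\kappa(\lambda)$.

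For the vertex-operator relation I would handle both signs from the skew-Schur formulas. In the $\Gamma_-$ case, (\ref{<..Gamma..>}) gives $\langle\lambda|\Gamma_{-}(\bsx)|\mu\rangle = s_{\lambda/\mu}(\bsx)$, while applying (\ref{<..Gamma'..>}) to the conjugate partitions gives $\langle\tp{\lambda}|\Gamma'_{-}(\bsx)|\tp{\mu}\rangle = s_{\tp{(\tp{\lambda})}/\tp{(\tp{\mu})}}(\bsx) = s_{\lambda/\mu}(\bsx)$ by involutivity, so the two sides agree. The $\Gamma_+$ case is the same computation carried out with the second equalities of (\ref{<..Gamma..>})--(\ref{<..Gamma'..>}).

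The step I expect to require the most care is exactly this $\Gamma_+$ bookkeeping: the relevant formulas read $\langle\mu|\Gamma_{+}(\bsx)|\lambda\rangle = s_{\lambda/\mu}(\bsx)$ and $\langle\mu|\Gamma'_{+}(\bsx)|\lambda\rangle = s_{\tp{\lambda}/\tp{\mu}}(\bsx)$, so the bra and ket labels are interchanged relative to the skew-Schur index. One must therefore track which partition plays which role before applying $\tp{(\tp{\cdot})} = \mathrm{id}$; getting the two labels straight is the only place an error could creep in, and once the matching is done correctly both sides equal $s_{\mu/\lambda}(\bsx)$. Everything else is direct substitution.
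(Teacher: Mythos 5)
Your proposal is correct and takes essentially the same route as the paper: the paper's own proof consists of the single observation that the identities follow from (\ref{<..L_0..>etc}), (\ref{<..Gamma..>}), (\ref{<..Gamma'..>}) together with $\kappa(\tp{\lambda}) = -\kappa(\lambda)$, and you merely spell out the details the paper leaves implicit (the cell-by-cell sign flip proving the $\kappa$ property, the involutivity $\tp{(\tp{\lambda})}=\lambda$, and the bra--ket bookkeeping showing both sides of the $\Gamma_{+}$ identity equal $s_{\mu/\lambda}(\bsx)$). One trivial slip: the $K$ matrix element is the \emph{second}, not the third, equality of (\ref{<..L_0..>etc}).
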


\begin{proof}
These identities are consequences of (\ref{<..L_0..>etc}), 
(\ref{<..Gamma..>}), (\ref{<..Gamma'..>}) and 
the following property of $\kappa(\lambda)$: 
\beqnn
  \kappa(\tp{\lambda}) = - \kappa(\lambda).
\eeqnn
\end{proof}

We can apply this rule to $\Gamma'_{+}(x)$ and 
the building blocks of $g_2$ to rewrite (\ref{Z(x)=<..g_2..>}) as 
\beq
  Z(x) = \langle 0|\Gamma_{+}(x)g_2'|0\rangle, 
  \label{Z(x)=<..g_2'..>}
\eeq
where 
\beq
  g_2' = \Gamma_{-}(q^{-\rho})^{-1}
        q^{-K/2}\Gamma'_{-}(q^{-\rho})\Gamma'_{+}(q^{-\rho})Q^{L_0}
        \Gamma'_{-}(q^{-\rho})\Gamma'_{+}(q^{-\rho})q^{-K/2}. 
  \label{g_2'}
\eeq

Having obtained the fermionic expression (\ref{Z(x)=<..g_2'..>}) 
containing $\Gamma_{+}(x)$, we now remove the other $\Gamma_{+}$'s 
from (\ref{Z(x)=<..g_2'..>}).  This is the last step 
for applying the method of out previous work \cite{TN16}. 

\begin{lemma}
\beq
  Z(x) = \prod_{n=1}^\infty(1 - Qq^n)^{-n}
         \cdot\langle 0|\Gamma_{+}(x)g'|0\rangle, 
  \label{Z(x)=<..g..>}
\eeq
where 
\beq
  g' = \Gamma_{-}(q^{-\rho})^{-1}q^{-K/2}
      \Gamma'_{-}(q^{-\rho})\Gamma'_{-}(Qq^{-\rho}). 
  \label{g'}
\eeq
\end{lemma}

\begin{proof} 
Since the rightmost two factors $\Gamma'_{+}(q^{-\rho})q^{-K/2}$ 
of (\ref{g_2'}) act on the vacuum vector trivially as 
\beqnn
  \Gamma'_{+}(q^{-\rho})q^{-K/2}|0\rangle = |0\rangle, 
\eeqnn
one can remove these operators from (\ref{Z(x)=<..g_2'..>}). 
Moreover, one can use the scaling relation 
\beqnn
  \Gamma'_{\pm}(x_1,x_2,\ldots)Q^{L_0} 
  = Q^k\Gamma'_{\pm}(Q^{\pm 1}x_1,Q^{\pm 1}x_2,\ldots) 
\eeqnn
and the commutation relation
\beqnn
  \Gamma'_{+}(x_1,x_2,\ldots)\Gamma'_{-}(y_1,y_2,\ldots)
  = \prod_{i,j=1}^\infty(1 - x_iy_j)^{-1}\cdot 
    \Gamma'_{-}(y_1,y_2,\ldots)\Gamma'_{+}(x_1,x_2,\ldots)
\eeqnn
of the vertex operators \cite{ORV03,YB08} 
to rewrite the product of the three operators 
in the middle of (\ref{g_2'}) as 
\begin{align*}
  \Gamma'_{+}(q^{-\rho})Q^{L_0}\Gamma'_{-}(q^{-\rho}) 
  &= Q^{L_0}\Gamma'_{+}(Qq^{-\rho})\Gamma'_{-}(q^{-\rho}) \\
  &= \prod_{n=1}^\infty(1 - Qq^n)^{-n}\cdot Q^{L_0} 
    \Gamma'_{-}(q^{-\rho})\Gamma'_{+}(Qq^{-\rho})\\
  &= \prod_{n=1}^\infty(1 - Qq^n)^{-n}\cdot 
     \Gamma'_{-}(Qq^{-\rho})Q^{L_0}\Gamma'_{+}(Qq^{-\rho}). 
\end{align*}
The two factors $Q^{L_0}\Gamma'_{+}(Qq^{-\rho})$ 
in the last line hit the vacuum vector $|0\rangle$ 
and disappear.  What remains are 
the constant $\prod_{n=1}^\infty(1 - Qq^n)^{-n}$ 
and the operator $g'$.  
\end{proof}

\begin{remark}
Note that the operator $g'$ amounts to the operator $g$ 
defined in (\ref{g'}).  Also note that the foregoing computations 
are actually a proof of the identity 
\beq
  g_2'|0\rangle = \prod_{n=1}^\infty(1 - Qq^n)^{-n}\cdot g'|0\rangle
  \label{g_2'|0>=const.g'|0>}
\eeq
of vectors in the fermionic Fock space. 
\end{remark}

\subsection{Generating operator of admissible basis}

We now borrow the idea of generating operators from the work 
of Alexandrov et al. \cite{Alexandrov1404,ALS1512}. 
A point of the Sato Grassmannian can be represented 
by a linear subspace $W$ of the space $V = \CC((x))$ 
of formal Laurent series \cite{SS83,SW85}.  
The generating operator is a linear automorphism $G$ of $V$ 
that maps $W_0 = \mathrm{Span}\{x^{-j}\}_{j=0}^\infty$ 
to $W$, so that an admissible basis $\{\Phi_j(x)\}_{j=0}^\infty$ 
of $W$ can be expressed as 
\beq
  \Phi_j(x) = Gx^{-j}. 
  \label{basis}
\eeq

In the fermionic formalism of the KP hierarchy \cite{JM83,MJD-book}, 
$W$ corresponds to a vector $|W\rangle$ 
of the fermionic Fock space. 
The associated tau function can be defined as 
\beqnn
  \tau(\bst) 
  = \langle 0|\exp\left(\sum_{k=1}^\infty t_kJ_k\right)|W\rangle. 
\eeqnn
Its special value at 
\beq
  \bst = [x] = (x, x^2/2,\ldots,x^k/k,\ldots) 
  \label{t=[x]}
\eeq
is related to the first member $\Phi_0(x)$ 
of an admissible basis of $W$ as 
\beq
  \tau([x]) = \langle 0|\Gamma_{+}(x)|W\rangle = C\Phi_0(x), 
  \label{tau(x)} 
\eeq
where $C$ is a nonzero constant.  

If $|W\rangle$ is generated from the vacuum vector $|0\rangle$ 
by an operator $g$ as 
\beqnn
  |W\rangle = g|0\rangle, 
\eeqnn
and $g$ is a special operator, such as a product 
of vertex operators and particular diagonal operators, 
then one can find $G$ rather easily from $g$ 
by the correspondence 
\beq
  L_0 \longleftrightarrow D = x\dfrac{d}{dx},\quad 
  K \longleftrightarrow \left(D - \frac{1}{2}\right)^2,\quad 
  J_k \longleftrightarrow x^{-k},\quad 
  \text{etc.},  
  \label{fb-do}
\eeq
between fermion bilinears and differential operators.  
This is the way how Alexandrov et al. derived 
the generating operator for various types 
of Hurwitz numbers \cite{ALS1512}. 
We did similar computations for tau functions 
in topological string theory \cite{TN16}. 
We apply the same method to the operator $g'$ of (\ref{g'}). 

It is now straightforward to find the generating operator $G$ 
of the subspace $W \subset V$ determined by the operator $g'$ 
of (\ref{g'}).  According to (\ref{fb-do}), 
$q^{-K/2}$ corresponds to a differential operator of infinite order: 
\beqnn
  q^{-K/2} \longleftrightarrow q^{-(D-1/2)^2/2}. 
\eeqnn
The three vertex operators in $g'$ amount to multiplication operators: 
\begin{gather*}
  \Gamma_{-}(q^{-\rho})^{-1} 
  \longleftrightarrow 
  \exp\left(- \sum_{i,k=1}^\infty\frac{q^{k(i-1/2)}x^k}{k}\right) 
  = \prod_{i=1}^\infty(1-q^{i-1/2}x),\\
  \Gamma'_{-}(q^{-\rho}) 
    \longleftrightarrow \prod_{i=1}^\infty(1+q^{i-1/2}x),\\
  \Gamma'_{-}(Qq^{-\rho}) 
    \longleftrightarrow \prod_{i=1}^\infty(1+Qq^{i-1/2}x). 
\end{gather*}
The generating operator is given by a product  
of these operators as follows. 

\begin{theorem}
The generating operator $G$ for the subspace 
$W \subset V$ determined by the operator $g'$ 
of (\ref{g'}) can be expressed as 
\beq
  G = \prod_{i=1}^\infty(1-q^{i-1/2}x)\cdot q^{-(D-1/2)^2/2} 
      \cdot\prod_{i=1}^\infty(1+q^{i-1/2}x)(1+Qq^{i-1/2}x). 
  \label{G}
\eeq
\end{theorem}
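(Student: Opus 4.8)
The plan is to read off $G$ as the single-particle (one-body) image of the Fock-space operator $g$ in (\ref{g}). The underlying principle, which I take from the setup (\ref{basis})--(\ref{tau(x)}) and from \cite{TN16,ALS1512}, is that the passage from a $\mathrm{GL}$-type operator on the fermionic Fock space---any operator of the form $\exp(\text{fermion bilinear})$, and products thereof---to its action on $V=\CC((x))$ is multiplicative and order-preserving: if $|W\rangle = g|0\rangle$ with $g=g^{(1)}\cdots g^{(r)}$, then $W = G\,W_0$ with $G = G^{(1)}\cdots G^{(r)}$, each $G^{(\ell)}$ being the one-body image of the corresponding $g^{(\ell)}$ under the dictionary (\ref{fb-do}). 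Granting this, the theorem reduces to translating the four factors of $g$ and composing them in the order displayed in (\ref{g}).

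The factor-by-factor translation is exactly the computation recorded just before the statement. First I would handle the diagonal factor: since $q^{-K/2}=\exp\bigl(-\tfrac12(\log q)K\bigr)$ and $K\leftrightarrow(D-1/2)^2$, its one-body image is the infinite-order operator $q^{-(D-1/2)^2/2}$, acting on $x^m$ as multiplication by $q^{-(m-1/2)^2/2}$. Next, each of the three vertex operators is an exponential of the lowering currents $J_{-k}$, so under the extension $J_{-k}\leftrightarrow x^{k}$ of the rule $J_k\leftrightarrow x^{-k}$ it becomes multiplication by an infinite product; expanding the three exponentials returns $\prod_i(1-q^{i-1/2}x)$, $\prod_i(1+q^{i-1/2}x)$ and $\prod_i(1+Qq^{i-1/2}x)$, respectively. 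Multiplying these in the order of (\ref{g})---first $\Gamma_-(q^{-\rho})^{-1}$, then $q^{-K/2}$, then the two $\Gamma'_-$ factors---gives precisely the right-hand side of (\ref{G}).

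As a self-contained check that the $G$ so assembled really is the generating operator, I would confirm $G\cdot 1=\Phi_0(x)$ against (\ref{tau(x)}) and (\ref{Z(x)=<..g..>}): applying (\ref{G}) to the constant function $1$ and letting $q^{-(D-1/2)^2/2}$ act term by term on the expanded product should reproduce $Z(x)$ up to the nonzero constant $\prod_{n=1}^\infty(1-Qq^n)^{-n}$ of (\ref{Z(x)=<..g..>}), which plays the role of $C$ in (\ref{tau(x)}).

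The hard part is not the bookkeeping but the analytic status of the infinite-order operator $q^{-(D-1/2)^2/2}$. I would need to verify that it, and hence the full composite $G$, acts well-definedly on $V$---that is, that the multiplicativity of the one-body correspondence, routine for the vertex operators, genuinely persists for a diagonal operator of exponential-of-quadratic type---and that the chosen order convention is the one sending $g|0\rangle$ to $G\,W_0$ rather than to a transpose. Controlling this single factor is therefore both the main technical obstacle and the step that ties the fermionic expression back to the combinatorial definition (\ref{Z(x)-Psum}) of $Z(x)$, since it is the Gaussian weight $q^{-(m-1/2)^2/2}$ that ultimately produces the factors $s_\lambda(q^{-\rho})^2Q^{|\lambda|}$.
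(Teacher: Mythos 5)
Your proposal takes essentially the same route as the paper: the paper's proof is precisely the factor-by-factor translation of (\ref{g}) under the dictionary (\ref{fb-do}), sending $q^{-K/2}$ to $q^{-(D-1/2)^2/2}$ and the three vertex operators $\Gamma_{-}(q^{-\rho})^{-1}$, $\Gamma'_{-}(q^{-\rho})$, $\Gamma'_{-}(Qq^{-\rho})$ to the multiplication operators $\prod_{i=1}^\infty(1-q^{i-1/2}x)$, $\prod_{i=1}^\infty(1+q^{i-1/2}x)$, $\prod_{i=1}^\infty(1+Qq^{i-1/2}x)$, composed in the same order. Your additional consistency check $G\cdot 1 = \Phi_0(x)$ and the remarks on the formal status of the infinite-order diagonal factor are extra commentary not present in the paper, but the core argument coincides.
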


\subsection{Derivation of quantum spectral curve}

Since the structure of the generating operator (\ref{G}) resembles 
those of tau functions in topological string theory \cite{TN16}, 
we define the Kac--Schwarz operator $A$ in essentially the same form, 
\footnote{This operator amounts to the inverse $A^{-1}$ 
of the Kac--Schwarz operator $A$ considered therein.}
namely, 
\beqnn
  A = G\cdot q^D\cdot G^{-1}. 
\eeqnn
The members $\Phi_j(x)$ of the admissible basis (\ref{basis}) 
thereby satisfy the linear equations 
\beqnn
  A\Phi_j(x) = q^{-j}\Phi_j(x). 
\eeqnn
In particular, the equation 
\beqnn
  (A - 1)\Phi_0(x) = 0 
\eeqnn
for $\Phi_0(x)$ (equivalently, $\tau([x])$) represents  
the quantum spectral curve.  As we show below, 
$A$ is a $q$-difference operator of finite order. 

\begin{lemma}
\begin{align}
  A &= \left(1 + q^{1/2}xq^{-D}(1-q^{1/2}x)^{-1}\right)
      \left(1 + Qq^{1/2}xq^{-D}(1-q^{1/2}x)^{-1}\right) \notag\\
    &\quad\mbox{}\times (1 - q^{1/2}x)q^D. 
  \label{A-prod}
\end{align}
\end{lemma}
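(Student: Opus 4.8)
The plan is to compute $A = G \cdot q^D \cdot G^{-1}$ directly by conjugating $q^D$ through each factor of $G$ in turn. The key observation is that $q^D$ acts on the multiplication operators as a $q$-shift: since $D = x\frac{d}{dx}$, we have $q^D x^k = q^k x^k q^D$, so conjugation by $q^D$ sends a multiplication operator $f(x)$ to $f(qx)$. More precisely, I expect to use the relations $q^D f(x) q^{-D} = f(qx)$ for the multiplicative factors, together with the way $q^{-(D-1/2)^2/2}$ interacts with $q^D$.

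First I would write $G = M_{-}(x)\, q^{-(D-1/2)^2/2}\, M_{+}(x)$, where $M_{-}(x) = \prod_{i=1}^\infty(1-q^{i-1/2}x)$ and $M_{+}(x) = \prod_{i=1}^\infty(1+q^{i-1/2}x)(1+Qq^{i-1/2}x)$ are the two multiplication operators, so that $A = M_{-}(x)\, q^{-(D-1/2)^2/2}\, M_{+}(x)\, q^D\, M_{+}(x)^{-1}\, q^{(D-1/2)^2/2}\, M_{-}(x)^{-1}$. The central step is to push $q^D$ leftward through $M_{+}(x)^{-1}$ and to evaluate the conjugation of $q^D$ by the infinite-order operator $q^{-(D-1/2)^2/2}$. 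For the latter, I expect the identity $q^{-(D-1/2)^2/2}\, q^D\, q^{(D-1/2)^2/2} = q^{1/2} x^{-1}\,(\text{shift})$-type relation; concretely, since $(D-1/2)^2/2$ is a polynomial in $D$, conjugation of $q^D$ by it is trivial (both are functions of $D$ and commute), so the real content comes entirely from conjugating the shift operator $q^{-D}$ (which will be generated) past the factor $q^{-(D-1/2)^2/2}$, producing a multiplicative $q^{1/2}x$ prefactor. This is the mechanism by which the bare shift $q^{-D}$ gets dressed into the combination $q^{1/2}x\,q^{-D}$ appearing in the stated formula.

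Carrying this out, I anticipate that conjugating $q^D$ by $M_{+}(x)$ produces $M_{+}(qx)/M_{+}(x)$ as a multiplicative factor times $q^D$, and that the infinite products telescope: the ratio $\prod_i (1+q^{i-1/2}qx)/\prod_i(1+q^{i-1/2}x)$ collapses because shifting the index $i \mapsto i+1$ cancels all but the boundary term, leaving a single factor like $(1+q^{1/2}x)^{-1}$, and similarly for the $Q$-product. The remaining factors $M_{-}(x)$ and its shift should reassemble, after the $q^{-(D-1/2)^2/2}$ conjugation converts $q^D$-dressing into the $q^{-D}(1-q^{1/2}x)^{-1}$ structure, into exactly the two bracketed binomial factors of~\eqref{A-prod}, with the trailing $(1-q^{1/2}x)q^D$ emerging from the leftover $M_{-}$ telescoping and the final bare shift.

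The main obstacle I expect is bookkeeping the infinite products rigorously: each telescoping step must be justified as an identity of formal operators acting on $V = \CC((x))$, and the order of operations matters because the shift operator $q^{-D}$ does not commute with multiplication. I would be careful to track whether each $q$-shift acts on the variable before or after multiplication, and to verify that the telescoping of $M_{+}(qx)/M_{+}(x)$ genuinely terminates to a finite-order difference operator rather than leaving a residual infinite product. A useful consistency check at the end is that \eqref{A-prod} is manifestly a $q$-difference operator of finite order (a second-order polynomial in $q^{-D}$ after clearing denominators), which matches the claim preceding the lemma that $A$ has finite order; this confirms that all the infinite-product dressing has indeed cancelled.
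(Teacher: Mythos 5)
Your overall strategy coincides with the paper's proof: conjugate $q^D$ through the three factors of $G$ from the inside out, using $q^Df(x)q^{-D}=f(qx)$ to telescope the infinite products and treating the Gaussian factor $q^{-(D-1/2)^2/2}$ separately. However, two of your three core computations are stated incorrectly, and one of them is the crux of the lemma. First, the innermost conjugation is inverted: writing $M_{+}(x)=\prod_{i=1}^\infty(1+q^{i-1/2}x)(1+Qq^{i-1/2}x)$, one has $q^DM_{+}(x)^{-1}=M_{+}(qx)^{-1}q^D$, hence
\[
  M_{+}(x)\,q^D\,M_{+}(x)^{-1}
  = \frac{M_{+}(x)}{M_{+}(qx)}\,q^D
  = (1+q^{1/2}x)(1+Qq^{1/2}x)\,q^D,
\]
not $M_{+}(qx)/M_{+}(x)=(1+q^{1/2}x)^{-1}(1+Qq^{1/2}x)^{-1}$ as you claim. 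Carried through, your version would produce the two bracketed factors of \eqref{A-prod} inverted, contradicting the statement being proved.

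Second, and more seriously, your proposed mechanism for the middle step cannot work. You correctly observe that $q^{-(D-1/2)^2/2}$ and $q^{\pm D}$ are both functions of $D$ and therefore commute, but you then attribute the appearance of the shift to ``conjugating $q^{-D}$ past $q^{-(D-1/2)^2/2}$'' --- by your own observation that conjugation is trivial and produces no prefactor, and in any case no $q^{-D}$ has been generated at that stage (the inner step yields only multiplication operators and $q^D$). The identity that actually drives the proof, and which your proposal never states, is
\[
  q^{-(D-1/2)^2/2}\cdot x\cdot q^{(D-1/2)^2/2} = xq^{-D},
\]
i.e.\ the Gaussian factor dresses the \emph{multiplication} operators produced by the first telescoping, converting $(1+q^{1/2}x)(1+Qq^{1/2}x)q^D$ into $(1+q^{1/2}xq^{-D})(1+Qq^{1/2}xq^{-D})q^D$; this is the only source of the $q^{-D}$'s appearing in \eqref{A-prod}. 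Your outer step is right in outline (conjugation by $\prod_{i=1}^\infty(1-q^{i-1/2}x)$ sends $q^{-D}\mapsto q^{-D}(1-q^{1/2}x)^{-1}$ and $q^D\mapsto(1-q^{1/2}x)q^D$), but with the inverted telescoping and without the displayed identity the computation as proposed does not reach \eqref{A-prod}.
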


\begin{proof}
One can compute $A = G\cdot q^D\cdot G^{-1}$ step by step. 
The first step is to apply the last infinite product 
of (\ref{G}) and its inverse to $q^D$.  
This can be carried out with the aid of the operator identity 
\beqnn
  q^D\cdot x = qxq^D 
\eeqnn
as follows: 
\begin{align*}
  &\prod_{i=1}^\infty(1+q^{i-1/2}x)(1+Qq^{i-1/2}x)\cdot q^D 
   \cdot\prod_{i=1}^\infty(1+Qq^{i-1/2}x)^{-1}(1+q^{i-1/2}x)^{-1}\\
  &= \prod_{i=1}^\infty(1+q^{i-1/2}x)(1+Qq^{i-1/2}x)\cdot 
     \prod_{i=1}^\infty(1+Qq^{i+1/2}x)^{-1}(1+q^{i+1/2}x)^{-1}\cdot q^D\\
  &= (1 + q^{1/2}x)(1 + Qq^{1/2}x)q^D. 
\end{align*}
The next step is to apply $q^{-(D-1/2)^2/2}$ and its inverse 
to the last operator.  This can be achieved by the identity 
\beqnn
  q^{-(D-1/2)^2/2}\cdot x\cdot q^{(D-1/2)^2/2} = xq^{-D} 
\eeqnn
as follows: 
\begin{align*}
  &q^{-(D-1/2)^2/2}\cdot (1 + q^{1/2}x)(1 + Qq^{1/2}x)q^D\cdot q^{(D-1/2)^2/2}\\
  &= (1 + q^{1/2}xq^{-D})(1 + Qq^{1/2}xq^{-D})q^D. 
\end{align*}
The first infinite product of (\ref{G}) and its inverse 
transform $q^{-D}$ and $q^D$ in this operator product as 
\begin{gather*}
  \prod_{i=1}^\infty(1-q^{i-1/2}x)\cdot q^{-D}
  \cdot\prod_{i=1}^\infty(1-q^{i-1/2}x)^{-1} 
  = q^{-D}(1 - q^{1/2}x)^{-1}, \\
  \prod_{i=1}^\infty(1-q^{i-1/2}x)\cdot q^D
  \cdot\prod_{i=1}^\infty(1-q^{i-1/2}x)^{-1} 
  = (1 - q^{1/2}x)q^D. 
\end{gather*}
Thus one obtains the result shown in (\ref{A-prod}). 
\end{proof}

Let us expand (\ref{A-prod}) and move $q^{\pm D}$ 
in each term to the right end.  The outcome reads 
\beq
  A = (1-q^{1/2}x)q^D + q^{1/2}x + Qq^{1/2}x 
      + Qx^2(1-q^{-1/2}x)^{-1}q^{-D}. 
  \label{A-sum}
\eeq
We are thus led to the following final expression 
of the quantum spectral curve of the melting crystal model. 

\begin{theorem}
$Z(x)$ satisfies the equation 
\beq
  (A - 1)Z(x) = 0 
  \label{(A-1)Z(x)=0}
\eeq
with respect to the $q$-difference operator (\ref{A-sum}). 
\end{theorem}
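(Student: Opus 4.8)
The plan is to leverage the apparatus assembled in the preceding subsections: once $Z(x)$ is recognized as a nonzero constant multiple of the first basis vector $\Phi_0(x) = G\cdot 1$, the equation (\ref{(A-1)Z(x)=0}) is immediate from the eigenvalue property of the Kac--Schwarz operator, and the explicit form (\ref{A-sum}) of $A$ has already been secured in the preceding Lemma. Hence the only genuine content of this theorem is the identification of $Z(x)$ with $\Phi_0(x)$ up to scale.

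First I would pin down that identification. Writing $|W\rangle = g|0\rangle$ with $g$ as in (\ref{g}), the fermionic formula (\ref{Z(x)=<..g..>}) reads $Z(x) = \prod_{n=1}^\infty(1-Qq^n)^{-n}\cdot\langle 0|\Gamma_{+}(x)|W\rangle$. The general principle (\ref{tau(x)}) then gives $\langle 0|\Gamma_{+}(x)|W\rangle = C\Phi_0(x)$ for a nonzero constant $C$, so that
\[
  Z(x) = C'\Phi_0(x), \qquad
  C' = C\prod_{n=1}^\infty(1-Qq^n)^{-n} \neq 0.
\]
This is nothing but the specialization at $\bst=[x]$ of the tau-function identification recorded after (\ref{g_2'|0>=const.g|0>}). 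It is exactly here that the generating operator $G$ of (\ref{G}) enters, for by (\ref{basis}) one has $\Phi_0(x)=Gx^{-0}=G\cdot 1$.

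Next I would verify the eigenvalue equation at $j=0$. Since $D=x\,d/dx$ annihilates constants, $q^D\cdot 1 = 1$, and therefore
\[
  A\Phi_0 = G\,q^D\,G^{-1}(G\cdot 1) = G\,q^D\cdot 1 = G\cdot 1 = \Phi_0,
\]
that is, $(A-1)\Phi_0(x)=0$. Multiplying through by the nonzero constant $C'$ and invoking the explicit $q$-difference operator (\ref{A-sum})---which the preceding Lemma identified with $A=G\,q^D\,G^{-1}$ through (\ref{A-prod})---yields (\ref{(A-1)Z(x)=0}).

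The step demanding care, more than difficulty, is reconciling the two guises of $A$. The computation $A\Phi_0=\Phi_0$ is performed abstractly with the infinite-order operator $G$ and its inverse, whereas the theorem asserts that the honest function $Z(x)$, given combinatorially by (\ref{Z(x)-Psum}), is annihilated by the finite-order operator (\ref{A-sum}). The bridge is the preceding Lemma, which establishes (\ref{A-prod}) as an operator identity for $A=G\,q^D\,G^{-1}$; one need only confirm that $\Phi_0=G\cdot 1$ lies in the space on which that identity holds and that $G^{-1}$ genuinely inverts $G$ on $\Phi_0$, so that the cancellation $G^{-1}(G\cdot 1)=1$ is legitimate. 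This is immediate from the triangular action of the three multiplication factors and of the diagonal operator $q^{-(D-1/2)^2/2}$ on the basis $\{x^{-j}\}_{j\ge 0}$, after which the argument is complete.
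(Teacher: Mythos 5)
Your proposal is correct and follows essentially the same route as the paper: identify $Z(x)$ (up to the nonzero constant $\prod_{n=1}^\infty(1-Qq^n)^{-n}$ times the constant $C$ in (\ref{tau(x)})) with $\Phi_0(x) = G\cdot 1$, observe that the Kac--Schwarz operator $A = G\,q^D\,G^{-1}$ satisfies $A\Phi_0 = \Phi_0$ since $q^D$ fixes constants, and then invoke the preceding Lemma's explicit computation (\ref{A-prod})--(\ref{A-sum}) to read the result as a finite-order $q$-difference equation. Your added remark on why $G^{-1}(G\cdot 1)=1$ is legitimate makes explicit a point the paper leaves tacit, but the argument is the paper's own.
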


\section{Prescription for 4D limit}

The 4D limit of the partition function $Z(\bszero)$ 
at $\bst = \bszero$ is achieved by setting the parameters as 
\beq
  q = e^{-R\hbar},\quad Q = (R\Lambda)^2 
  \label{qQ-4Dlimit}
\eeq
and letting $R \to 0$ \cite{NT07}. $R$ is the radius 
of the fifth dimension of $\RR^4\times S^1$ in which 
SUSY Yang-Mills theory lives \cite{Nekrasov96}, 
$\hbar$ is a parameter of the self-dual $\Omega$ background,  
and $\Lambda$ is an energy scale of 4D $\calN = 2$ 
SUSY Yang-Mills theory \cite{Nekrasov02,NO03}. 
The definition of 4D limit of $Z(x)$ and $Z(\bst)$ needs 
$R$-dependent transformations of $x$ and $\bst$. 

\subsection{4D limit of $Z(x)$ and quantum spectral curve}

Alongside the substitution (\ref{qQ-4Dlimit}) of parameters, 
we transform the variable $x$ to a new variable $X$ as 
\beq
  x = x(X,R) = e^{R(X-\hbar/2)}. 
  \label{x-X-rel}
\eeq
As it turns out below, both the combinatorial expression 
(\ref{Z(x)-Psum}) and the $q$-difference equation 
(\ref{(A-1)Z(x)=0}) of $Z(x)$ behave nicely as $R \to 0$ 
under this $R$-dependent transformation of $x$.  

\begin{lemma}
\label{Z4D(X)-lemma}
\beqnn
  \lim_{R\to 0}Z(x(X,R)) = Z_{\frD}(X),
\eeqnn
where 
\beq
  Z_{\frD}(X) = \sum_{\lambda\in\calP}
     \left(\frac{\dim\lambda}{|\lambda|}\right)^2 
     \left(\frac{\Lambda}{\hbar}\right)^{2|\lambda|} 
     \prod_{i=1}^\infty\frac{X-(\lambda_i-i+1)\hbar}{X-(-i+1)\hbar}.
  \label{Z4D(X)-Psum}
\eeq
\end{lemma}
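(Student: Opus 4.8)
The plan is to work directly from the combinatorial expression (\ref{Z(x)-Psum}) and pass to the limit $R\to 0$ term by term, after inserting the substitutions (\ref{qQ-4Dlimit}) and (\ref{x-X-relation}). The summand of (\ref{Z(x)-Psum}) factorizes into three pieces — the weight $s_\lambda(q^{-\rho})^2$, the power $Q^{|\lambda|}$, and the infinite product over $i$ — and I would analyze the small-$R$ behavior of each separately, then recombine. The decisive feature to verify is that the divergent and vanishing powers of $R$ produced by the first two pieces cancel exactly, which is precisely what forces the $R$-dependent choice $Q=(R\Lambda)^2$ in (\ref{qQ-4Dlimit}).

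For the weight, I would feed $q=e^{-R\hbar}$ into the hook-length formula (\ref{q-hook-formula}). Each denominator factor behaves as $q^{-h(i,j)/2}-q^{h(i,j)/2}=2\sinh(R\hbar\,h(i,j)/2)=R\hbar\,h(i,j)+O(R^3)$, while $q^{-\kappa(\lambda)/4}=1+O(R)$, so
\[
  s_\lambda(q^{-\rho})=\frac{1}{(R\hbar)^{|\lambda|}\prod_{(i,j)\in\lambda}h(i,j)}\bigl(1+O(R)\bigr).
\]
Using (\ref{hook-formula}) to rewrite $\prod h(i,j)=|\lambda|!/\dim\lambda$ and multiplying the square by $Q^{|\lambda|}=(R\Lambda)^{2|\lambda|}$, the powers of $R$ cancel and the weight tends to $(\Lambda/\hbar)^{2|\lambda|}(\dim\lambda/|\lambda|!)^2$, i.e.\ the Plancherel factor appearing in (\ref{Z4D(X)-Psum}).

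For the product, I would compute the exponents under (\ref{x-X-relation}): a direct calculation gives $q^{\lambda_i-i+1/2}x=e^{R(X-(\lambda_i-i+1)\hbar)}$ and $q^{-i+1/2}x=e^{R(X-(-i+1)\hbar)}$. Linearizing $1-e^{Ra}=-Ra+O(R^2)$ in numerator and denominator, the common factor $-R$ cancels and each ratio converges to $\frac{X-(\lambda_i-i+1)\hbar}{X-(-i+1)\hbar}$. Note that for $i$ exceeding the number of parts of $\lambda$ one has $\lambda_i=0$ and the factor is identically $1$ for every $R$, so the product is genuinely finite and the limit of the product equals the product of the limits.

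The one point that needs real care — and the main obstacle — is justifying the interchange of $\lim_{R\to0}$ with the sum over $\lambda\in\calP$; the three computations above only give termwise convergence. If $Z(x)$ and $Z_{\frD}(X)$ are read as formal power series graded by $|\lambda|$ (equivalently in $Q$, resp.\ $\Lambda^2$), this is immediate, since each graded piece is a finite sum over partitions of fixed size and the limit is taken coefficient by coefficient. To upgrade this to genuine analytic convergence one would instead exhibit an $R$-uniform summable majorant for the summands (for $R$ near $0$ and $X$ away from the poles $X=(-i+1)\hbar$) and invoke dominated convergence; the weight estimate above, together with the boundedness of the finite product, supplies the required bound. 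I expect this analytic justification, rather than the asymptotic bookkeeping, to be the delicate part.
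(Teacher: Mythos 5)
Your proposal is correct and follows essentially the same route as the paper: the paper's proof likewise analyzes the three building blocks $s_\lambda(q^{-\rho})^2$, $Q^{|\lambda|}$ and the ratio factors termwise under (\ref{qQ-4Dlimit}) and (\ref{x-X-relation}), invoking the hook-length formulae (\ref{q-hook-formula}) and (\ref{hook-formula}) exactly as you do (and your $\dim\lambda/|\lambda|!$ is the correct reading --- the missing factorial in (\ref{Z4D(X)-Psum}) is a typo, as comparison with (\ref{Z4D(T)-Psum}) shows). The only difference is that the paper stops at termwise convergence and never discusses interchanging the limit with the sum over $\calP$, so your final paragraph (formal-series grading in $\Lambda^2$, or a dominated-convergence argument) supplies care beyond what the paper itself records.
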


\begin{proof}
As $R \to 0$ under the $R$-dependent transformations 
(\ref{qQ-4Dlimit}) and (\ref{x-X-rel}), 
the building blocks of (\ref{Z(x)-Psum}) behave as 
\begin{align*}
  s_\lambda(q^{-\rho})^2 
  &= \left(\frac{\dim\lambda}{|\lambda|}\right)^2(R\hbar)^{-2|\lambda|}
     (1 + O(R)),\\
  Q^{|\lambda|} &= (R\Lambda)^{2|\lambda|},\\
  \frac{1 - q^{\lambda_i-i+1/2}x(X,R)}{1 - q^{-i+1/2}x(X,R)} 
  &= \frac{X - (\lambda_i-i+1)\hbar}{X - (-i+1)\hbar}(1 + O(R)).
\end{align*}
Note that the hook-length formulae (\ref{q-hook-formula}) 
and (\ref{hook-formula}) are used in the derivation 
of the first line above.  
\end{proof}

\begin{lemma}
\beqnn
  A-1 = \left(- (X-\hbar)(e^{-\hbar d/dX}-1) 
           - \frac{\Lambda^2}{X}e^{\hbar d/dX}\right)R + O(R^2). 
\eeqnn
\end{lemma}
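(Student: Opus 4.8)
The plan is to take the explicit $q$-difference operator
\beqnn
  A = (1-q^{1/2}x)q^D + q^{1/2}x + Qq^{1/2}x
      + Qx^2(1-q^{-1/2}x)^{-1}q^{-D}
\eeqnn
from (\ref{A-sum}) and substitute the $R$-dependent parametrizations
$q = e^{-R\hbar}$, $Q = (R\Lambda)^2$ and $x = e^{R(X-\hbar/2)}$
from (\ref{qQ-4Dlimit}) and (\ref{x-X-relation}), then expand each
summand as a power series in $R$ and collect the coefficient of $R^1$,
showing the $R^0$ term cancels and the $R^1$ term is the claimed
difference operator. The key observation that turns the operator $A$
into a genuine difference operator in $X$ is that conjugation by the
transformation $x = e^{R(X-\hbar/2)}$ identifies $q^{\pm D} = q^{\pm x\,d/dx}$
with the shift operators $e^{\mp\hbar\,d/dX}$, since $D = x\,d/dx
= R^{-1}\,d/dX$ under (\ref{x-X-relation}) and hence
$q^{\pm D} = e^{\mp R\hbar D} = e^{\mp\hbar\,d/dX}$. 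This is the pivotal
algebraic step: it is exact in $R$ (no expansion needed), so the
$q$-shifts become the finite shifts $e^{\pm\hbar\,d/dX}$ appearing in the
statement.

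With that identification in hand I would expand the scalar coefficients.
First $q^{1/2}x = e^{-R\hbar/2}e^{R(X-\hbar/2)} = e^{R(X-\hbar)}
= 1 + R(X-\hbar) + O(R^2)$, and likewise $q^{-1/2}x = e^{RX}
= 1 + RX + O(R^2)$, so that $1-q^{1/2}x = -R(X-\hbar) + O(R^2)$ and
$(1-q^{-1/2}x)^{-1} = -(RX)^{-1}(1 + O(R))$. The factor $Q = R^2\Lambda^2$
supplies two extra powers of $R$, so the term $Qq^{1/2}x$ is $O(R^2)$ and
drops out at order $R$, while in $Qx^2(1-q^{-1/2}x)^{-1}q^{-D}$ the
product $Qx^2(1-q^{-1/2}x)^{-1}$ equals $R^2\Lambda^2\cdot e^{2R(X-\hbar/2)}
\cdot(-(RX)^{-1})(1+O(R)) = -R\,\Lambda^2 X^{-1}(1+O(R))$, contributing
$-R\,\Lambda^2 X^{-1}e^{\hbar\,d/dX}$ at order $R$. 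Assembling the
surviving pieces, the $q^D$-term gives $(1-q^{1/2}x)q^D
= -R(X-\hbar)e^{-\hbar\,d/dX}+O(R^2)$, the bare term $q^{1/2}x$ gives
$1 + R(X-\hbar)+O(R^2)$, and subtracting the $1$ from $A-1$ cancels the
$R^0$ part; collecting the $R^1$ coefficient then yields
$-(X-\hbar)(e^{-\hbar\,d/dX}-1) - \Lambda^2 X^{-1}e^{\hbar\,d/dX}$, as
claimed.

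I expect the main obstacle to be bookkeeping rather than conceptual
difficulty: one must be careful that the operator-valued coefficients
and the shift operators do not commute, so the expansions of the scalar
prefactors must be carried out \emph{to the left} of $q^{\pm D}$ in the
order fixed by (\ref{A-sum}), and one must verify that the $O(R^2)$
remainders are uniform enough (on the relevant space of formal series or
meromorphic functions of $X$) that the limit statement is meaningful.
A secondary subtlety is the singular factor $(1-q^{-1/2}x)^{-1}$, whose
leading behaviour $-(RX)^{-1}$ combines with the two powers of $R$ from
$Q$ to leave exactly one surviving power of $R$; checking that this is
the genuine leading order, and that no lower-order pole in $R$ is hidden
there, is where I would concentrate the care. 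Once the shift-operator
identification $q^{\pm D}=e^{\mp\hbar\,d/dX}$ is established, the rest is
a direct order-$R$ Taylor expansion.
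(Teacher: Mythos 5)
Your proposal is correct and follows essentially the same route as the paper: both substitute $q=e^{-R\hbar}$, $Q=(R\Lambda)^2$, $x=e^{R(X-\hbar/2)}$ into (\ref{A-sum}), use the exact identification $q^{\pm D}=e^{\mp\hbar\,d/dX}$, and Taylor-expand the scalar prefactors in $R$. The only cosmetic difference is that the paper first groups $(1-q^{1/2}x)q^D+q^{1/2}x-1=(1-q^{1/2}x)(q^D-1)$ so the $R^0$ cancellation is built in from the start, whereas you verify that cancellation at the end; the content is identical.
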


\begin{proof}
(\ref{A-sum}) implies that $A-1$ can be expressed as 
\beqnn
  A-1 = (1 - q^{1/2}x)(q^D - 1) + Qq^{1/2}x 
        + Qx^2(1 - q^{-1/2}x)^{-1}q^{-D}. 
\eeqnn
As $R \to 0$ under the transformations (\ref{qQ-4Dlimit}) 
and (\ref{x-X-rel}), each term of this expression 
behaves as follows: 
\begin{align*}
  1 - q^{1/2}x &= - R(X-\hbar) + O(R^2),\\
  q^{\pm D} &= e^{\mp\hbar d/dX},\\
  Qq^{1/2}x &= O(R^2),\\
  Qx^2(1 - q^{-1/2}x)^{-1}q^{-D} 
  &= - \frac{\Lambda^2}{X}R + O(R^2). 
\end{align*}
\end{proof}

As a consequence of the foregoing two facts, we obtain 
the following difference equation for $Z_{\frD}(X)$. 

\begin{theorem}
$Z_{\frD}(X)$ satisfies the difference equation 
\beq
  \left((X-\hbar)(e^{-\hbar d/dX} - 1)
    + \frac{\Lambda^2}{X}e^{\hbar d/dX}\right)Z_{\frD}(X) = 0. 
  \label{Z4D(X)-diffeq}
\eeq
\end{theorem}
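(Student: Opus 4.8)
The plan is to combine the two preceding lemmas and pass the $q$-difference equation $(A-1)Z(x)=0$ to the limit $R\to 0$. The starting point is the exact identity $(A-1)Z(x)=0$ from equation~\eqref{(A-1)Z(x)=0}, valid for every $R>0$. Under the $R$-dependent substitutions \eqref{qQ-4Dlimit} and \eqref{x-X-relation}, I would regard both the operator $A-1$ and the function $Z(x)=Z(x(X,R))$ as families depending on $R$, and read off their leading asymptotics from the two lemmas just proved. The first lemma gives the pointwise limit $Z(x(X,R))\to Z_{\frD}(X)$, and the second lemma gives the operator expansion
\beqnn
  A-1 = \left(-(X-\hbar)(e^{-\hbar d/dX}-1)-\frac{\Lambda^2}{X}e^{\hbar d/dX}\right)R + O(R^2),
\eeqnn
so that $A-1$ vanishes to first order in $R$, with a nontrivial difference operator as its leading coefficient.

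The key observation is that the equation $(A-1)Z(x)=0$ holds identically in $R$, so I may divide by the overall factor $R$ before taking the limit. Writing $D_{\frD}$ for the leading difference operator
\beqnn
  D_{\frD} = -(X-\hbar)(e^{-\hbar d/dX}-1)-\frac{\Lambda^2}{X}e^{\hbar d/dX},
\eeqnn
the identity becomes $\bigl(D_{\frD}+O(R)\bigr)Z(x(X,R))=0$. Letting $R\to 0$ and using $Z(x(X,R))\to Z_{\frD}(X)$ together with the convergence of the shift operators $q^{\pm D}=e^{\mp\hbar d/dX}$ acting on $Z$, the $O(R)$ correction drops out and one is left with $D_{\frD}Z_{\frD}(X)=0$. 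Multiplying through by $-1$ reproduces exactly the asserted equation~\eqref{Z4D(X)-diffeq}. So the proof is essentially a matter of stripping off the common factor of $R$ and invoking the two lemmas.

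The delicate point — the step I would flag as the main obstacle — is justifying that the limit of the product equals the product of the limits, that is, that $\lim_{R\to 0}\bigl[(A-1)Z(x(X,R))\bigr]=D_{\frD}\,\lim_{R\to 0}Z(x(X,R))$. The operator $A-1$ involves the shifts $e^{\pm\hbar d/dX}$, which evaluate $Z$ at the translated arguments $X\mp\hbar$; one must check that $Z(x(X\mp\hbar,R))$ converges to $Z_{\frD}(X\mp\hbar)$ and that the $O(R)$ and $O(R^2)$ remainder terms in the operator expansion, when applied to $Z$ and to its shifts, remain bounded as $R\to 0$ so that after division by $R$ they still vanish in the limit. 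In other words, the convergence in Lemma~\ref{Z4D(X)-lemma} should be understood not merely pointwise but uniformly enough (together with its shifts) that it is compatible with the application of the $R$-dependent difference operator. Granting the convergence is sufficiently well-behaved — which is reasonable given that each building block in the proof of Lemma~\ref{Z4D(X)-lemma} carries an explicit $1+O(R)$ control and the partition sum is dominated by the Plancherel-type weights — the argument closes immediately, and I would expect the author's proof to simply cite the two lemmas and extract the coefficient of $R$ without dwelling on these analytic niceties.
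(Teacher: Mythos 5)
Your proposal is correct and is essentially the paper's own argument: the paper states the theorem as an immediate consequence of the two preceding lemmas, i.e.\ it divides the exact identity $(A-1)Z(x)=0$ by $R$ and lets $R\to 0$ using the operator expansion and the convergence $Z(x(X,R))\to Z_{\frD}(X)$, exactly as you do. The analytic interchange-of-limits point you flag is not even discussed in the paper, which passes over it silently.
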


By the shift$X \to X + \hbar$ of $X$, 
(\ref{Z4D(X)-diffeq}) turns into the equation 
\beqnn
  \left(X(e^{-\hbar d/dX} - 1) 
    + \frac{\Lambda^2}{X+\hbar}e^{\hbar d/dX}
  \right)Z_{\frD}(X+\hbar) = 0,
\eeqnn
which agrees with the equation derived 
by Dunin-Barkowski et al. \cite{DBMNPS13}. 
Moreover, as they found, this equation can be converted 
to the simpler form 
\beq
  \left(e^{-\hbar d/dX} + \Lambda^2e^{\hbar d/dX} - X\right)\Psi(X) = 0 
  \label{CP1qsc-eq}
\eeq
by the gauge transformation 
\beqnn
  \Psi(X) 
  = \exp\left(B\left(-\hbar\frac{d}{dX}\right)\frac{X-X\log X}{\hbar}\right) 
    Z_{\frD}(X+\hbar),
\eeqnn
where $B(t)$ is the generating function 
\beqnn
  B(t) = \frac{t}{e^t - 1} 
\eeqnn
of the Bernoulli numbers.  It is this equation (\ref{CP1qsc-eq}) 
that is identified by Dunin-Barkowski et al. \cite{DBMNPS13} 
as the equation of quantum spectral curve for Gromov-Witten theory 
of $\CC\PP^1$.  Its classical limit 
\beqnn
  y^{-1} + y - x = 0 
\eeqnn
as $\hbar \to 0$ (with $\Lambda$ normalized to $1$) 
is the spectral curve of topological recursion 
in this case \cite{DBOSS12,DBSS12,NS11}. 
We have thus re-derived the quantum spectral curve 
of $\CC\PP^1$ from the 4D limit of the melting crystal model.

\subsection{4D limit of $Z(\bst)$} 

As shown in the proof of Lemma \ref{Z4D(X)-lemma}, 
the deformed Boltzmann weight $s_\lambda(q^{-\rho})^2Q^{|\lambda|}$ 
behaves nicely in the limit as $R \to 0$.  
To achieve the 4D limit of $Z(\bst)$, 
we have only to find an appropriate $R$-dependent 
transformation $\bst = \bst(\bsT,R)$ 
to the coupling constants $\bsT = (T_1,T_2,\ldots)$ 
of 4D external potentials $\phi^{\frD}_k(\lambda)$ 
for which the identity 
\beq
  \lim_{R\to 0}\phi(\bst(\bsT,R),\lambda) 
  = \phi_{\frD}(\bsT,\lambda) 
  = \sum_{k=1}^\infty T_k\phi^{\frD}_k(\lambda) 
  \label{phi(t)->phi4D(T)}
\eeq
holds.  In accordance with the commonly adopted setting 
in the literature \cite{MN06}, 
we wish to tune the transformation $\bst = \bst(\bsT,R)$ 
so that $\phi^{\frD}_k(\lambda)$'s take the {\it polynomial\/} form 
\beq
  \phi^{\frD}_k(\lambda) 
  = \sum_{i=1}^\infty\left((\lambda_i-i+1)^k - (-i+1)^k\right). 
  \label{phi4D_k}
\eeq
Thus the problem is how to derive 
these polynomial potentials from the {\it exponential\/} 
potentials (\ref{phi_k}).   
The following is a clue to this problem. 

\begin{lemma}
As $R \to 0$ under the transformation (\ref{qQ-4Dlimit}) 
of the parameters,  
\beq
  \sum_{j=1}^k(-1)^{k-j}\binom{k}{j}\phi_j(\lambda) 
  = \phi^{\frD}_k(\lambda)(-R\hbar)^k + O(R^{k+1}). 
  \label{phi-phi4D-rel}
\eeq
\end{lemma}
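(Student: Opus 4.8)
The plan is to start from the definition (\ref{phi_k}) of $\phi_j(\lambda)$ and compute the left-hand side of (\ref{phi-phi4D-rel}) directly, exchanging the order of the two finite sums. Since each $\phi_j(\lambda)$ is an $i$-sum of $q^{j(\lambda_i-i+1)}-q^{j(-i+1)}$, the binomial combination $\sum_{j=1}^k(-1)^{k-j}\binom{k}{j}\phi_j(\lambda)$ becomes an $i$-sum in which the internal $j$-sum is of the form $\sum_{j=1}^k(-1)^{k-j}\binom{k}{j}(w^j - v^j)$ with $w = q^{\lambda_i-i+1}$ and $v = q^{-i+1}$. The key algebraic observation I would use is the binomial identity $\sum_{j=0}^k(-1)^{k-j}\binom{k}{j}w^j = (w-1)^k$, so that after accounting for the missing $j=0$ term (which cancels between the $w$ and $v$ contributions since $w^0 = v^0 = 1$) the inner sum collapses to $(w-1)^k - (v-1)^k = (q^{\lambda_i-i+1}-1)^k - (q^{-i+1}-1)^k$.

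Next I would substitute $q = e^{-R\hbar}$ from (\ref{qQ-4Dlimit}) and Taylor-expand each factor $q^m - 1 = e^{-Rm\hbar}-1 = -Rm\hbar + O(R^2)$ as $R \to 0$. Raising to the $k$th power gives $(q^m-1)^k = (-R\hbar)^k m^k + O(R^{k+1})$, where here $m = \lambda_i-i+1$ in the first term and $m = -i+1$ in the second. Thus the $i$-th summand equals $(-R\hbar)^k\bigl((\lambda_i-i+1)^k - (-i+1)^k\bigr) + O(R^{k+1})$, and summing over $i$ — the sum being finite because $\lambda_i=0$ for large $i$ makes all but finitely many summands vanish identically — reproduces exactly $\phi^{\frD}_k(\lambda)(-R\hbar)^k + O(R^{k+1})$ in view of the proposed form (\ref{phi4D_k}). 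This establishes (\ref{phi-phi4D-rel}).

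The main obstacle I anticipate is not the algebra of the collapse, which is routine, but controlling the error term uniformly so that the $O(R^{k+1})$ survives the (finite) $i$-summation and the leading order is genuinely the one claimed. Two points need care. First, I must verify that the $O(R^2)$ corrections in $q^m-1$, when raised to the $k$th power and combined across the binomial sum, do not conspire to contribute at order $R^k$; because $(q^m-1)^k$ is the product of $k$ factors each of order $R$, the next correction is genuinely $O(R^{k+1})$ for each fixed $i$, so this is safe. Second, since only finitely many $i$ contribute nontrivially (those with $\lambda_i \neq 0$, plus the telescoping tail where $\lambda_i-i+1$ and $-i+1$ differ), the finite sum preserves the order of the remainder, and no uniformity-in-$i$ issue arises. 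I would therefore emphasize at the outset that the sum over $i$ is effectively finite, which is precisely the remark already made after (\ref{phi_k}), and then the expansion goes through term by term without subtlety.
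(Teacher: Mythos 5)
Your proof is correct and follows essentially the same route as the paper: the paper's own proof uses the identity $\sum_{j=1}^k(-1)^{k-j}\binom{k}{j}q^{ju} = (q^u-1)^k - (-1)^k$ (your ``missing $j=0$ term'' observation in disguise), takes the difference so the $(-1)^k$ terms cancel, Taylor-expands $(q^u-1)^k - (q^v-1)^k = (u^k-v^k)(-R\hbar)^k + O(R^{k+1})$, and then specializes $u = \lambda_i-i+1$, $v = -i+1$ and sums over the finitely many contributing $i$. Your extra care about the error order and the finiteness of the $i$-sum is sound but not a departure from the paper's argument.
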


\begin{proof}
The difference of the two identities 
\begin{gather*}
  \sum_{j=1}^k(-1)^{k-j}\binom{k}{j}q^{ju} = (q^u-1)^k - (-1)^k,\\
  \sum_{j=1}^k(-1)^{k-j}\binom{k}{j}q^{jv} = (q^v-1)^k - (-1)^k
\end{gather*}
yields the identity 
\begin{align}
  \sum_{j=1}^k(-1)^{k-j}\binom{k}{j}(q^{ju} - q^{jv}) 
  &= (q^u-1)^k - (q^v-1)^k \notag\\
  &= (u^k - v^k)(-R\hbar)^k + O(R^{k+1}). 
     \label{q^(ku)-u^k-rel}
\end{align}
One can derive (\ref{phi-phi4D-rel}) by specializing 
this identity to $u = \lambda_i-i+1$ and $v = -i+1$ 
and summing the outcome over $i = 1,2,\ldots$.  
\end{proof}

(\ref{phi-phi4D-rel}) implies the identity 
\beqnn
  \lim_{R\to 0}\sum_{k=1}^\infty\frac{T_k}{(-R\hbar)^k} 
    \sum_{j=1}^k(-1)^{k-j}\binom{k}{j}\phi_j(\lambda) 
  = \sum_{k=1}^\infty T_k\phi^{\frD}_k(\lambda) 
\eeqnn
for $\phi_k(\lambda)$'s and the potentials shown 
in (\ref{phi4D_k}).  Since 
\beqnn
  \sum_{k=1}^\infty\frac{T_k}{(-R\hbar)^k} 
    \sum_{j=1}^k(-1)^{k-j}\binom{k}{j}\phi_j(\lambda) 
  = \sum_{j=1}^\infty\sum_{k=j}^\infty
      \binom{k}{j}\frac{(-1)^{k-j}T_k}{(-R\hbar)^k}\phi_j(\lambda), 
\eeqnn
one can conclude that the identity (\ref{phi(t)->phi4D(T)}) 
holds if $t_k$'s and $T_k$'s are related by the linear relations 
\beq
  t_j = \sum_{k=j}^\infty\binom{k}{j}\frac{(-1)^{k-j}T_k}{(-R\hbar)^k}. 
  \label{t-T-rel}
\eeq
This gives an $R$-dependent transformation $\bst = \bst(\bsT,R)$ 
that we have sought for.  Note that this is a triangular 
(hence invertible) linear transformation between $\bst$ and $\bsT$. 

Let $Z_{\frD}(\bsT)$ denote the deformed partition function 
\beq
  Z_{\frD}(\bsT)
  = \sum_{\lambda\in\calP}\left(\frac{\dim\lambda}{|\lambda|!}\right)^2 
      \left(\frac{\Lambda}{\hbar}\right)^{2|\lambda|}
      e^{\phi_{\frD}(\bsT,\lambda)} 
  \label{Z4D(T)-Psum}
\eeq
with the external potentials (\ref{phi4D_k}). 
We are thus led to the following conclusion. 

\begin{theorem}
As $R \to 0$ under the $R$-dependent transformation 
$\bst = \bst(\bsT,R)$ of the coupling constants 
defined by (\ref{t-T-rel}), 
$Z(\bst)$ converges to $Z_{\frD}(\bsT)$: 
\beq
  \lim_{R\to 0}Z(\bst(\bsT,R)) = Z_{\frD}(\bsT). 
\eeq
\end{theorem}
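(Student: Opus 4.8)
The plan is to reduce the assertion to a summand-by-summand limit and then justify passing $\lim_{R\to0}$ through the sum over $\lambda\in\calP$ in the combinatorial definition (\ref{Z(t)-Psum}). For a fixed partition $\lambda$ the summand factorizes as the deformed Boltzmann weight $s_\lambda(q^{-\rho})^2Q^{|\lambda|}$ times the potential factor $e^{\phi(\bst(\bsT,R),\lambda)}$, and both factors are already under control. The weight is exactly the quantity whose $R\to0$ behaviour was computed inside the proof of Lemma \ref{Z4D(X)-lemma}: under (\ref{qQ-4Dlimit}) one has $s_\lambda(q^{-\rho})^2Q^{|\lambda|}=(\dim\lambda/|\lambda|!)^2(\Lambda/\hbar)^{2|\lambda|}(1+O(R))$, with limit the Boltzmann weight appearing in (\ref{Z4D(T)-Psum}). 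The potential factor is governed by (\ref{phi(t)->phi4D(T)}), which the transformation (\ref{t-T-relation}) was tailored to guarantee; since the exponential is continuous, $e^{\phi(\bst(\bsT,R),\lambda)}\to e^{\phi_{\frD}(\bsT,\lambda)}$. Multiplying the two limits, every summand of $Z(\bst(\bsT,R))$ converges to the corresponding summand of $Z_{\frD}(\bsT)$.

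The substance of the theorem is the interchange of $\lim_{R\to0}$ with the infinite $\lambda$-sum, which I expect to be the main obstacle: a naive pointwise majorant is awkward, because the $q$-hook factors in (\ref{q-hook-formula}) make $s_\lambda(q^{-\rho})^2$ behave non-uniformly in $R$ for large $\lambda$. The clean device is the Cauchy resummation $\sum_{\lambda\in\calP}s_\lambda(q^{-\rho})^2Q^{|\lambda|}=\prod_{n=1}^\infty(1-Qq^n)^{-n}$, the deformed MacMahon function already used in (\ref{Z(x)=<..g..>}). Its $R\to0$ limit under (\ref{qQ-4Dlimit}) is the finite number $e^{(\Lambda/\hbar)^2}$, which coincides with the limit total $\sum_{\lambda}(\dim\lambda/|\lambda|!)^2(\Lambda/\hbar)^{2|\lambda|}=\sum_{d\ge0}(\Lambda/\hbar)^{2d}/d!$ obtained from the classical identity $\sum_{|\lambda|=d}(\dim\lambda)^2=d!$. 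Since the weights are nonnegative, converge termwise, and have convergent matching totals, Scheff\'e's lemma upgrades this to $\ell^1$-convergence of the weights, so the limit passes through the sum at $\bsT=\bszero$.

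To handle the deformation I would work coefficient-wise in $\bsT$. A fixed monomial $\prod_kT_k^{m_k}$ involves only the finitely many potentials $\phi_k^{\frD}$ with $m_k>0$, and by (\ref{phi-phi4D-rel}) the coefficient of $T_k$ in $\phi(\bst(\bsT,R),\lambda)$ is $\phi_k^{\frD}(\lambda)+O(R)$, with leading term and error both bounded by a fixed power $(1+|\lambda|)^M$ for all small $R$, since $\phi_k^{\frD}(\lambda)$ is polynomial in the parts of $\lambda$. The polynomial weights $(1+|\lambda|)^M$ are absorbed by differentiating the MacMahon product: applying the Euler operator $Q\,d/dQ$ shows that $\sum_{\lambda}(1+|\lambda|)^M s_\lambda(q^{-\rho})^2Q^{|\lambda|}$ has a finite $R\to0$ limit (a polynomial in $(\Lambda/\hbar)^2$ times $e^{(\Lambda/\hbar)^2}$), again matching the corresponding limit total, so a polynomial-weighted Scheff\'e estimate applies to each $\bsT$-coefficient while the residual $w_\lambda(0)\bigl(p_\lambda(R)-p_\lambda(0)\bigr)$ term is killed by ordinary dominated convergence. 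Combining this with the termwise limit of the first paragraph yields $\lim_{R\to0}Z(\bst(\bsT,R))=Z_{\frD}(\bsT)$. The genuinely delicate input, and the reason the Cauchy identity is indispensable, is that the individual $t_j(\bsT,R)$ in (\ref{t-T-relation}) diverge like $R^{-j}$, so finiteness of the limit rests entirely on the binomial cancellation (\ref{phi-phi4D-rel}) and on the factorial decay encoded in the MacMahon totals.
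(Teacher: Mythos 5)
Your first paragraph is, in substance, the paper's entire proof: the paper computes the termwise limit of the Boltzmann weight $s_\lambda(q^{-\rho})^2Q^{|\lambda|}$ (inside the proof of Lemma \ref{Z4D(X)-lemma}) and the termwise limit (\ref{phi(t)->phi4D(T)}) of the potentials (via the binomial identity (\ref{phi-phi4D-rel}), which is exactly what the transformation (\ref{t-T-relation}) is designed to achieve), and then asserts the theorem; the interchange of $\lim_{R\to 0}$ with the sum over $\calP$ is not discussed at all, the identity being understood coefficient-wise in $\bsT$. Your Scheff\'e argument at $\bsT=\bszero$, matching the Cauchy/MacMahon total $\prod_{n\ge 1}(1-Qq^n)^{-n}\to e^{(\Lambda/\hbar)^2}$ against $\sum_{d\ge 0}(\Lambda/\hbar)^{2d}/d!$, is correct and is a genuine addition beyond the paper.

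However, your treatment of general coefficients---the part you yourself call ``the substance''---has a genuine gap: the claim that the $T_k$-coefficient of $\phi(\bst(\bsT,R),\lambda)$ is bounded by $C(1+|\lambda|)^M$ uniformly for all small $R$ is false. By the identity used in the paper's proof of (\ref{phi-phi4D-rel}), this coefficient equals
\beqnn
  \sum_{i\ge 1}\left(f_R(\lambda_i-i+1)^k - f_R(-i+1)^k\right),
  \qquad f_R(u)=\frac{1-e^{-R\hbar u}}{R\hbar},
\eeqnn
and $f_R(u)$ is exponentially large in magnitude for negative $u$: $|f_R(u)|=(e^{R\hbar|u|}-1)/(R\hbar)$. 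For the single column $\lambda=(1^d)$ the sum telescopes to $f_R(1)^k-f_R(1-d)^k$, which for every fixed $R>0$ grows like $e^{kR\hbar(d-1)}/(R\hbar)^k$ as $d\to\infty$, whereas the limit $\phi^{\frD}_k((1^d))=1-(1-d)^k$ is polynomial in $d$. So the convergence in (\ref{phi-phi4D-rel}) is pointwise in $\lambda$ but badly non-uniform, no polynomial majorant exists on any interval $0<R<R_0$, and both steps of your scheme (the ``polynomial-weighted Scheff\'e'' estimate and the dominated-convergence argument for the residual term) collapse. A correct interchange argument must estimate the weight--potential \emph{products}, not the potentials alone: from the hook-length formula (\ref{q-hook-formula}) one gets $s_\lambda(q^{-\rho})^2Q^{|\lambda|}=q^{2n(\lambda)+|\lambda|}Q^{|\lambda|}\prod_{(i,j)\in\lambda}(1-q^{h(i,j)})^{-2}$ with $n(\lambda)=\sum_i(i-1)\lambda_i\ge\binom{\ell(\lambda)}{2}$, and it is the Gaussian factor $q^{2n(\lambda)}\le e^{-R\hbar\,\ell(\lambda)(\ell(\lambda)-1)}$ that must absorb the $e^{kR\hbar\,\ell(\lambda)}$ growth of the deformed potentials in the column direction (equivalently, one can truncate the $\lambda$-sum at an $R$-dependent scale and use this factor to kill the tail). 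That pairing is where the analytic content of the interchange actually lies, and it is missing from your argument as written.
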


\begin{remark}
It is easy to see that $Z_{\frD}(X)$ and $Z_{\frD}(\bsT)$ 
are connected by the substitution 
\beq
  T_k = - \frac{\hbar^k}{kX^k} 
  \label{T-X-rel}
\eeq
as 
\beqnn
  Z_{\frD}(X) 
  = Z_{\frD}\left(-\frac{\hbar}{X}, -\frac{\hbar^2}{2X^2}, 
    \ldots, -\frac{\hbar^k}{kX^k},\dots\right). 
\eeqnn
This fact plays a role in the next section.  
\end{remark}

\begin{remark}
Although appears to be somewhat {\it ad hoc\/}, 
our prescription is essentially the only way to implement the 4D limit. 
This prescription is based on the natural relation 
(\ref{q^(ku)-u^k-rel}) that connects the exponential 
and polynomial functions in the external potentials. 
The $R$-dependent transformation (\ref{t-T-rel}) 
of the coupling constants is reminiscent of the method 
of various scaling limits in statistical mechanics 
and quantum field theory. 
\end{remark}

\section{Bilinear equations}

\subsection{Fay-type bilinear equations for KP hierarchy}

Let us recall the notion of Fay-type bilinear equations 
in the theory of the KP hierarchy \cite{AvM92,SS83,TT95}. 

Given a general tau function $\tau(\bst)$, one can consider 
an $N$-variate generalization of (\ref{tau(x)}):
\beqnn
  \tau([x_1]+\cdots+[x_N]) 
  = \langle 0|\Gamma_{+}(x_1,\ldots,x_N)|W\rangle. 
\eeqnn
Its product with the Vandermonde determinant 
\beqnn
  \Delta(x_1,\cdots,x_N) = \prod_{1\leq i<j\leq N}(x_i - x_j)
\eeqnn
is the $N$-point function of the fermion field $\psi^*(x^{-1})$ 
in the background state $|W\rangle$ \cite{JM83,MJD-book}.  

Actually, it is more convenient to leave $\bst$ as well.  
Let $\tau(\bst,x_1,\ldots,x_N)$ denote the function 
thus obtained: 
\begin{align}
  \tau(\bst,x_1,\ldots,x_N) 
  &= \tau(\bst + [x_1] + \cdots + [x_N]) \notag\\
  &= \langle 0|\Gamma_{+}(x_1,\ldots,x_N)
     \exp\left(\sum_{k=1}^\infty t_kJ_k\right)|W\rangle. 
  \label{tau(t,xx)}
\end{align}
By virtue of the aforementioned interpretation 
as the $N$-point function of a fermion field, the product 
\beqnn
  \xi(x_1,\ldots,x_N) = \Delta(x_1,\ldots,x_N)\tau(\bst,x_1,\ldots,x_N) 
\eeqnn
with the Vandermonde determinant satisfies the bilinear equations 
\beq
  \sum_{j=N}^{2N}(-1)^{j-N}\xi(x_1,\ldots,x_{N-1},x_j) 
    \xi(x_N,\ldots,\widehat{x_j},\ldots,x_{2N}) = 0, 
  \label{Fay-eq}
\eeq
where $\widehat{x_j}$ means removing $x_j$ 
from the list of variables therein. 
As pointed out by Sato and Sato \cite{SS83}, 
these equations are avatars of the Pl\"ucker relations 
among the Pl\"ucker coordinates of a Grassmann manifold. 

The simplest ($N = 2$) case 
\begin{align}
  &(x_1-x_2)(x_3-x_4)\tau(\bst+[x_1]+[x_2])\tau(\bst+[x_3]+[x_4]) \notag\\
  &\mbox{} 
   - (x_1-x_3)(x_2-x_4)\tau(\bst+[x_1]+[x_3])\tau(\bst+[x_2]+[x_4]) \notag\\
  &\mbox{} 
   + (x_1-x_4)(x_2-x_3)\tau(\bst+[x_1]+[x_4])\tau(\bst+[x_2]+[x_3]) = 0 
  \label{Fay4-eq}
\end{align}
of (\ref{Fay-eq}), referred to as a Fay-type bilinear equation, 
is known to play a particular role. 
Specialized to $x_4 = 0$, this equation turns 
into the so called Hirota--Miwa equation 
\begin{align}
  &(x_1-x_2)x_3\tau(\bst+[x_1]+[x_2])\tau(\bst+[x_3]) \notag\\
  &\mbox{} + (x_2-x_3)x_1\tau(\bst+[x_2]+[x_3])\tau(\bst+[x_1]) \notag\\
  &\mbox{} + (x_3-x_1)x_2\tau(\bst+[x_3]+[x_1])\tau(\bst+[x_2]) 
   = 0. 
  \label{HM-eq}
\end{align}
Moreover, dividing this equation by $x_3$ and letting $x_3 \to 0$ 
yield the differential Fay identity \cite{AvM92} 
\begin{align}
  &(x_1-x_2)\left(\tau(\bst+[x_1]+[x_2])\tau(\bst) 
            - \tau(\bst+[x_1])\tau(\bst+[x_2])\right) \notag\\
  &\mbox{} + x_1x_2\left(\tau(\bst+[x_1])\tau_{t_1}(\bst+[x_2])
              - \tau_{t_1}(\bst+[x_1])\tau(\bst+[x_2])\right) 
   = 0, 
   \label{diff-Fay-eq}
\end{align}
where $\tau_{t_1}(\bst)$ denotes the $t_1$-derivative of $\tau(\bst)$. 
It is known \cite{TT95} that the differential Fay identity 
characterizes a general tau function of the KP hierarchy 
in the following sense. 

\begin{theorem}
\label{KP-Fay-thm}
A function $\tau(\bst)$ of $\bst = (t_1,t_2,\ldots)$ 
is a tau function of the KP hierarchy if and only if 
it satisfies (\ref{diff-Fay-eq}). 
\end{theorem}

As a corollary, it turns out that each of (\ref{Fay4-eq}) 
and (\ref{HM-eq}), too, is a necessary and sufficient condition 
for a function $\tau(\bst)$ to be a KP tau function.  
This fact is a clue to the subsequent consideration.  

\begin{remark}
One can rewrite (\ref{Fay4-eq}) and (\ref{HM-eq}) 
to the equivalent forms 
\begin{align}
  &(x_1-x_2)(x_3-x_4)\tau(\bst-[x_1]-[x_2])\tau(\bst-[x_3]-[x_4]) \notag\\
  &\mbox{} 
   - (x_1-x_3)(x_2-x_4)\tau(\bst-[x_1]-[x_3])\tau(\bst-[x_2]-[x_4]) \notag\\
  &\mbox{} 
   + (x_1-x_4)(x_2-x_3)\tau(\bst-[x_1]-[x_4])\tau(\bst-[x_2]-[x_3]) = 0 
  \label{Fay4-eq2}
\end{align}
and 
\begin{align}
  &(x_1-x_2)x_3\tau(\bst-[x_1]-[x_2])\tau(\bst-[x_3]) \notag\\
  &\mbox{} + (x_2-x_3)x_1\tau(\bst-[x_2]-[x_3])\tau(\bst-[x_1]) \notag\\
  &\mbox{} + (x_3-x_1)x_2\tau(\bst-[x_3]-[x_1])\tau(\bst-[x_2]) 
   = 0. 
  \label{HM-eq2}
\end{align}
\end{remark}

\subsection{Bilinear equations in melting crystal model}

Let $Z(\bst,x_1,\ldots,x_N)$ denote the function 
\beq
  Z(\bst,x_1,\ldots,x_N) 
  = \sum_{\lambda\in\calP}s_\lambda(q^{-\rho})^2Q^{|\lambda|}
     e^{\phi(\bst,\lambda)}\prod_{j=1}^N\prod_{i=1}^\infty 
     \frac{1 - q^{\lambda_i-i+1}x_j}{1 - q^{-i+1}x_j}. 
  \label{Z(t,xx)-Psum}
\eeq
This function can be obtained from $Z(\bst)$ by shifting $\bst$ as 
\beqnn
  Z(\bst,x_1,\ldots,x_N) = Z(\bst - [x_1] - \cdots - [x_N]). 
\eeqnn
Note that the single-variate specialization $Z(x)$ of $Z(\bst)$ 
for the quantum spectral curve coincides with $Z(\bszero,q^{1/2}x)$. 
Since $Z(\bst)$ is a tau function of the KP hierarchy,  
the aforementioned bilinear equations imply, in particular, 
the three-term equation 
\begin{align}
  &(x_1-x_2)(x_3-x_4)Z(\bst,x_1,x_2)Z(\bst,x_3,x_4) \notag\\
  &\mbox{} - (x_1-x_3)(x_2-x_4)Z(\bst,x_1,x_3)Z(\bst,x_2,x_4) \notag\\
  &\mbox{} + (x_1-x_4)(x_2-x_3)Z(\bst,x_1,x_4)Z(\bst,x_2,x_3) = 0. 
  \label{Z-Fay4-eq}
\end{align}
as a consequence of (\ref{Fay4-eq2}). 

These bilinear equations turn out to survive the 4D limit. 
Let us set the parameters $q,Q$ and the coupling constants $\bst$ 
to the $R$-dependent form shown in (\ref{qQ-4Dlimit}) 
and (\ref{t-T-rel}), and transform the variables 
$x_1,\ldots,x_N$ to new variables $X_1,\ldots,X_N$ as 
\beq
  x_j = e^{RX_j},\quad j = 1,\ldots,N. 
  \label{x-X-rel2}
\eeq
Note that this transformation takes essentially the same form 
as the relation (\ref{x-X-rel}) between $x$ and $X$ 
in the 4D limit of the quantum spectral curve.  
As $R \to 0$ under these $R$-dependent transformations, 
$Z(\bst,x_1,\ldots,x_N)$  converges to a function of the form 
\begin{align}
  &Z_{\frD}(\bsT,X_1,\ldots,X_N) \notag\\
  &= \sum_{\lambda\in\calP}\left(\frac{\dim\lambda}{|\lambda|!}\right)^2 
     \left(\frac{\Lambda}{\hbar}\right)^{2|\lambda|}e^{\phi_{\frD}(\bsT,\lambda)} 
     \prod_{j=1}^N\prod_{i=1}^\infty
     \frac{X_j - (\lambda_i-i+1)\hbar}{X_j - (-i+1)\hbar}.
  \label{Z4D(T,XX)-Psum}
\end{align}
Since the differences $x_i - x_j$ in $\Delta(x_1,\ldots,x_N)$ 
behave as 
\beqnn
  x_i - x_j = R(X_i - X_j) + O(R^2), 
\eeqnn
the three-term bilinear equation (\ref{Z-Fay4-eq}), 
divided by $R^2$ before letting $R \to 0$, turns into the equation 
\begin{align}
  &(X_1 - X_2)(X_3 - X_4)Z_{\frD}(\bsT,X_1,X_2)Z_{\frD}(\bsT,X_3,X_4) \notag\\
  &\mbox{} - (X_1 - X_3)(X_2 - X_4)Z_{\frD}(\bsT,X_1,X_3)Z_{\frD}(\bsT,X_2,X_4) 
   \notag\\
  &\mbox{} + (X_1 - X_4)(X_2 - X_3)Z_{\frD}(\bsT,X_1,X_4)Z_{\frD}(\bsT,X_2,X_3) 
   = 0
  \label{Z4D-Fay4-eq}
\end{align}
for $Z_{\frD}(\bsT,X_i,X_j)$'s.  
The more general bilinear equations (\ref{Fay-eq}), 
too, have 4D counterparts. 

Let us note here that $Z_{\frD}(\bsT,X_1,\ldots,X_N)$ 
can be obtained from $Z_{\frD}(\bsT)$ by shifting $\bsT$ as 
\beqnn
  Z_{\frD}(\bsT,X_1,\ldots,X_N) 
  = Z_{\frD}(\bsT - [\hbar/X_1] - \cdots - [\hbar/X_N]) 
\eeqnn
just as $Z_{\frD}(X)$ and $Z_{\frD}(\bsT)$ are connected 
by the substitution shown in (\ref{T-X-rel}).
This is the same relation as $\tau(\bst,x_1,\ldots,x_N)$ 
is derived from $\tau(\bst)$ except that $x_j$'s 
are replaced by $h/X_j$'s. For more precise comparison 
with the bilinear equations for KP tau functions,  
one should rewrite (\ref{Z4D-Fay4-eq}) as 
\begin{align*}
  &(\hbar/X_1 - \hbar/X_2)(\hbar/X_3 - \hbar/X_4)
   Z_{\frD}(\bsT,X_1,X_2)Z_{\frD}(\bsT,X_3,X_4) \notag\\
  &\mbox{} - (\hbar/X_1 - \hbar/X_3)(\hbar/X_2 - \hbar/X_4)
             Z_{\frD}(\bsT,X_1,X_3)Z_{\frD}(\bsT,X_2,X_4) 
   \notag\\
  &\mbox{} + (\hbar/X_1 - \hbar/X_4)(\hbar/X_2 - \hbar/X_3)
             Z_{\frD}(\bsT,X_1,X_4)Z_{\frD}(\bsT,X_2,X_3) 
   = 0. 
\end{align*}
This equation literally corresponds to (\ref{Fay4-eq2}). 
According to Theorem \ref{KP-Fay-thm}, this is enough 
to deduce the following conclusion.  

\begin{theorem}
$Z_{\frD}(\bsT)$ is a tau function of the KP hierarchy. 
\end{theorem}

\subsection{Extension to Toda hierarchy}

$Z_{\frD}(\bsT)$ has a fermionic expression, 
analogous to (\ref{Z(t)=<..e^H..>}), of the form 
\beq
  Z_{\frD}(\bsT)
  = \langle 0|e^{J_1}(\Lambda/\hbar)^{2L_0}
    e^{H_{\frD}(\bsT)}e^{J_{-1}}|0\rangle, 
  \label{Z4D(t)=<..e^H..>}
\eeq
where 
\beqnn
  H_{\frD}(\bsT) = \sum_{k=1}^\infty T_kH^{\frD}_k,\quad 
  H^{\frD}_k = \sum_{n\in\ZZ}n^k{:}\psi_{-n}\psi^*_n{:}. 
\eeqnn
It is natural to extend this function to a set 
of functions $Z_{\frD}(\bsT,s)$, $s \in \ZZ$, as 
\beqnn
  Z_{\frD}(\bsT,s) 
  = \langle s|e^{J_1}(\Lambda/\hbar)^{2L_0}
    e^{H_{\frD}(\bsT)}e^{J_{-1}}|s\rangle, 
\eeqnn
where $|s\rangle$ and $\langle s|$ are the ground states 
of the charge-$s$ sector of the Fock spaces.  
These partition functions have the combinatorial expression 
\beq
  Z_{\frD}(\bsT,s)
  = \sum_{\lambda\in\calP}\left(\frac{\dim\lambda}{|\lambda|!}\right)^2 
      \left(\frac{\Lambda}{\hbar}\right)^{2|\lambda|+s(s+1)}
      e^{\phi_{\frD}(\bsT,s,\lambda)}, 
  \label{Z4D(t,s)-Psum}
\eeq
where 
\begin{gather*}
  \phi_{\frD}(\bsT,s,\lambda) = \sum_{k=1}^\infty T_k\phi^{\frD}_k(s,\lambda),\\
  \phi^{\frD}_k(s,\lambda) 
  = \sum_{i=1}^\infty\left((\lambda_i-i+1+s)^k - (-i+1+s)^k\right) 
     + \text{correction terms}. 
\end{gather*}
The external potentials are obtained by rearrangement 
of terms in the formal expression 
\beqnn
  \phi^{\frD}_k(s,\lambda) 
  = \sum_{i=1}^\infty(\lambda_i-i+1+s)^k - \sum_{i=1}^\infty(-i+1)^k, 
\eeqnn
hence 
\begin{align}
  \text{correction terms} 
  &= \sum_{i=1}^\infty(-i+1+s)^k - \sum_{i=1}^\infty(-i+1)^k \notag\\
  &= \begin{cases}
    1^k + \cdots + s^k& \text{for $s > 0$},\\
    0 & \text{for $s = 0$},\\
    - (-1)^k - \cdots - (s+1)^k& \text{for $s < 0$}. 
    \end{cases}
    \label{4Dphi-corr}
\end{align}
$Z_{\frD}(\bsT,s)$ is known as a generating function 
of all-genus Gromov-Witten invariants of $\CC\PP^1$ \cite{OP02}, 
and proven to be a tau function of the 1D Toda hierarchy 
\cite{DZ04,Getzler01,Milanov06}.  

The foregoing approach to the integrable structure 
of $Z_{\frD}(\bsT)$ can be carried over to $Z_{\frD}(\bsT,s)$.  
Namely, one can use a Toda version \cite{Takasaki07,Teo06} 
of Fay-type bilinear equations to prove the following 
\footnote{A full generating function of all-genus 
Gromov-Witten invariants of $\CC\PP^1$ \cite{OP02} 
depends on another set of variables $s_1,s_2,\ldots$ 
(the decedents of $s$) as well.  
These variables are now set to $0$, 
though the Toda conjecture is proven in the presence 
of these variables and the associated time evolutions 
(the extended Toda hierarchy \cite{CDZ04}).  
In this sense, our proof is incomplete as an alternative proof 
of the Toda conjecture.}: 

\begin{theorem}
\label{Z4D-Toda-thm}
$Z_{\frD}(\bsT,s)$ is a tau function of the 1D Toda hierarchy. 
\end{theorem}

The proof is far more complicated than the case of $Z_{\frD}(\bsT)$.  
We present its detail in Appendix.

\section{Conclusion}

Primary motivation of this work was to understand the result 
of Dunin-Barkowski et al. \cite{DBMNPS13} in the language of 
the quantum spectral curve of the melting crystal model. 
In the course of solving this problem, we have found 
how to achieve the 4D limit of the deformed partition 
function $Z(\bst)$ itself. As a byproduct, this prescription 
for 4D limit has turned out to transfer Fay-type bilinear equations 
from $Z(\bst)$ to its 4D limit $Z_{\frD}(\bsT)$.  

It will be better to summarize these results from two aspects, 
namely, quantum curves and bilinear equations: 

\begin{itemize}
\item[1.] {\it Quantum curves\/}: 
One can derive a quantum spectral curve of the melting crystal model 
by the method of our work on quantum mirror curves 
in topological string theory \cite{TN16}. This quantum curve 
is formulated as the $q$-difference equation (\ref{(A-1)Z(x)=0}) 
for the single-variate specialization $Z(x)$ of $Z(\bst)$.  
Its 4D limit is achieved by transforming the variable $x$ 
to a new variable $X$ as shown in (\ref{x-X-rel}) 
and letting $R \to 0$.  (\ref{(A-1)Z(x)=0}) thereby 
turns into the difference equation (\ref{Z4D(X)-diffeq}) 
for the 4D version $Z_{\frD}(X)$ of $Z(x)$.  
(\ref{Z4D(X)-diffeq}) can be further converted 
to the quantum spectral curve (\ref{CP1qsc-eq}) 
of Gromov-Witten theory of $\CC\PP^1$. 
(\ref{Z4D(X)-diffeq}) and (\ref{CP1qsc-eq}) are derived 
by Dunin-Barkowski et al. \cite{DBMNPS13} by 
genuinely combinatorial computations. 
Our approach highlights a role of the KP hierarchy 
that underlies these quantum curves.  
\item[2.] {\it Bilinear equations\/}: 
According to our previous work on the melting crystal model \cite{NT07}, 
$Z(\bst)$ is a tau function of the KP hierarchy.  
As $R \to 0$ under the $R$-dependent transformation (\ref{t-T-rel}) 
of the coupling constants, $Z(\bst)$ converges 
to the 4D version $Z_{\frD}(\bsT)$.  In this limit, 
the three-term bilinear equation (\ref{Z-Fay4-eq}) for $Z(\bst)$ 
turns into its counterpart (\ref{Z4D-Fay4-eq}) for $Z_{\frD}(\bsT)$. 
This implies that $Z_{\frD}(\bsT)$, too, is a tau function 
of the KP hierarchy.  Thus we have obtained a new approach 
to the integrable structure in Okounkov and Pandharipande's 
generating function of all-genus Gromov-Witten invariants 
of $\CC\PP^1$ \cite{OP02}. 
\end{itemize}

As explained in Appendix, the method of 4D limit 
for bilinear equations can be extend to the $s$-deformed 
partition functions $Z(\bst,s)$ and $Z_{\frD}(\bsT,s)$. 
This is a yet another proof of the fact \cite{DZ04,Getzler01,Milanov06} 
that $Z_{\frD}(\bsT,s)$ is a tau function of the 1D Toda hierarchy.  

Let us stress that the integrable structure of $Z_{\frD}(\bsT)$ 
still remains to be fully elucidated.  
Its 5D (or K-theoretic) lift $Z(\bst)$ has a fermionic expression, 
such as (\ref{Z(t)=<..g_3..>}), that shows manifestly 
that $Z(\bst)$ is a tau function of the KP hierarchy.  
Moreover, since the generating operator 
in the fermionic expression is rather simple, 
one can even find the associated generating operator $G$ 
in $V = \CC((x))$ explicitly.  
In contrast, no similar fermionic expression of $Z_{\frD}(\bsT)$ 
is currently known.  The preliminary fermionic expression 
(\ref{Z4D(t)=<..e^H..>}) of $Z_{\frD}(\bsT)$ cannot be converted 
to such a form by the method of our previous work \cite{NT07}.  
The limiting procedure from $Z(\bst)$ is a way to overcome 
this difficulty, but this will not be a final answer. 

An alternative approach will be the route 
from the equivariant Gromov-Witten theory. 
A generating function of the equivariant Gromov-Witten invariants 
of $\CC\PP^1$ is known to be a tau function 
of the 2D Toda hierarchy \cite{OP02b}.  
This generating function should reproduce $Z_{\frD}(\bsT,s)$ 
in the limit as the equivariant parameter tends to $0$.  
We hope to address this issue elsewhere.

\subsection*{Acknowledgements}

The author is grateful to Toshio Nakatsu for valuable comments. 
This work is partly supported by the JSPS Kakenhi Grant 
JP25400111, JP15K04912 and JP18K03350.

\appendix

\section{Derivation of Toda hierarchy}

We prove Theorem \ref{Z4D-Toda-thm} in this appendix.  
This proof is based on a 1D Toda version of Theorem \ref{KP-Fay-thm}. 
This theorem characterizes all tau functions of the 1D Toda hierarchy 
by two bilinear equations.  The 5D partition function itself 
is not a genuine tau function, and satisfies slightly modified 
bilinear equations.  In the 4D limit, these equations turn 
into bilinear equations for the 4D partition function.  
These equations agree with the bilinear equations 
for 1D Toda tau functions.  Thus we can conclude that 
the 4D partition function is a tau function of the 1D Toda hierarchy.  

\subsection{Toda tau function in 5D partition function}

Let us consider the $s$-deformed 5D partition function $Z(\bst,s)$ 
written in the fermionic form 
\beq
  Z(\bst,s) = \langle s|\Gamma_{+}(q^{-\rho})Q^{L_0}
            e^{H(\bst)}\Gamma_{-}(q^{-\rho})|s\rangle. 
\eeq
Just as the fermionic expressions (\ref{Z(t)=<..e^H..>}) 
of $Z(\bst)$ is converted to (\ref{Z(t)=<..g_1..>}), 
one can rewrite $Z(\bst,s)$ \cite{NT07} as 
\beq
  Z(\bst,s) = q^{-(4s^3-s)/12}\exp\left(\sum_{k=1}^\infty\frac{q^kt_k}{1-q^k}\right) 
    \langle s|\exp\left(\sum_{k=1}^\infty(-1)^kq^{k/2}t_kJ_k\right)
    g_1|s\rangle.
\eeq
Moreover, one can remove the multipliers $(-1)^kq^{k/2}$ 
of $t_k$'s as 
\begin{align}
  Z(\bst,s) 
  &= q^{-(4s^3-s)/12}q^{-s(s+1)/2} 
     \exp\left(\sum_{k=1}^\infty\frac{q^kt_k}{1-q^k}\right) \notag\\
  &\quad\mbox{}\times
    \langle s|\exp\left(\sum_{k=1}^\infty t_kJ_k\right)
    (-q^{1/2})^{L_0}g_1(-q^{1/2})^{L_0}|s\rangle. 
   \label{Z(t,s)=<..g_1..>}
\end{align}
Note that the $s$-dependent prefactors of the vevs 
originate in the relations 
\beqnn
\begin{gathered}
  \langle s|q^{-K/2} = q^{-(4s^3-s)/24}\langle s|,\quad 
  q^{-K/2}|s\rangle = q^{-(4s^3-s)/24}|s\rangle,\\
  \langle s|(-q^{1/2})^{-L_0} = (-q^{1/2})^{-s(s+1)/2}\langle s|,\quad 
  (-q^{1/2})^{-L_0}|s\rangle = (-q^{1/2})^{-s(s+1)/2}|s\rangle. 
\end{gathered}
\eeqnn

Let $\tau(\bst,s)$ denote the vev in (\ref{Z(t,s)=<..g_1..>}): 
\beq
  \tau(\bst,s) = \langle s|\exp\left(\sum_{k=1}^\infty t_kJ_k\right)
    (-q^{1/2})^{L_0}g_1(-q^{1/2})^{L_0}|s\rangle. 
  \label{tau(t,s)}
\eeq
As a consequence of (\ref{g_1-symmetry}), one can move 
the exponential operator of $J_k$'s to the right as 
\beq
  \tau(\bst,s) = \langle s|(-q^{1/2})^{L_0}g_1(-q^{1/2})^{L_0}
  \exp\left(\sum_{k=1}^\infty t_kJ_k\right)|s\rangle. 
\eeq
This means that $\tau(\bst,s)$ is a tau function 
of the 1D Toda hierarchy \cite{UT84}. 
The product of this function with the exponential function 
of $t_k$'s in (\ref{Z(t,s)=<..g_1..>}), too, is a tau function 
of the 1D Toda hierarchy.  This is, however, not the case 
for further multiplication by $q^{-(4s^3-s)/12}q^{-s(s+1)/2}$. 
As we show below, these two $s$-dependent prefactors 
modify the bilinear equations.

\subsection{Bilinear equations for 5D partition function}

The 1D Toda hierarchy can be characterized 
by the two bilinear equations shown below.  
This is a consequence of the characterization 
of the 2D Toda hierarchy by three Fay-type identities 
\cite{Takasaki07,Teo06}. 

\begin{theorem}
A function $\tau(\bst,s)$ of $\bst = (t_1,t_2,\ldots)$ and $s$ 
is a tau functions of the 1D Toda hierarchy if and only if 
it satisfies the following bilinear equations: 
\begin{align}
  (y-x)\tau(\bst,s+1)\tau(\bst-[x]-[y],s) 
  - y\tau(\bst-[x],s+1)\tau(\bst-[y],s)& \notag\\
  \mbox{} + x\tau(\bst-[y],s+1)\tau(\bst-[x],s) &= 0, 
    \label{Toda-Fay1}\\
  \tau(\bst-[x],s)\tau(\bst-[y],s) 
  - \tau(\bst,s)\tau(\bst-[x]-[y],s)& \notag\\
  \mbox{}+ xy\tau(\bst,s+1)\tau(\bst-[x]-[y],s-1) &= 0. 
    \label{Toda-Fay2}
\end{align}
\end{theorem}

The tau function (\ref{tau(t,s)}) and its product with 
the exponential function of $t_k$'s in (\ref{Z(t,s)=<..g_1..>}) 
satisfy these equations.  We convert these equations 
to bilinear equations for $Z(\bst,s)$.  It is convenient 
to separate some simple factors from $Z(\bst,s)$ as  
\beq
  Z(\bst,s) = Q^{s(s+1)/2}e^{\phi(\bst,s,\emptyset)}\tilde{Z}(\bst,s),
  \label{Z-Zt-rel}
\eeq
where
\beqnn
\begin{gathered}
  \tilde{Z}(\bst,s) 
    = \sum_{\lambda\in\calP}s_\lambda(q^{-\rho})^2Q^{|\lambda|}
      e^{\tilde{\phi}(\bst,s,\lambda)},\\
  \tilde{\phi}(\bst,s,\lambda) 
    = \sum_{k=1}^\infty t_k\tilde{\phi}_k(s,\lambda) 
    = \phi(\bst,s,\lambda) - \phi(\bst,s,\emptyset). 
\end{gathered}
\eeqnn
Note that $\phi(\bst,s,\emptyset)$ consists of 
the ``correction terms'' (\ref{phi-corr}): 
\beqnn
\begin{gathered}
  \phi(\bst,s,\emptyset) = \sum_{k=1}^\infty t_k\phi_k(s,\emptyset),\\
  \phi_k(s,\emptyset) 
    = \sum_{i=1}^\infty q^{k(-i+1+s)} - \sum_{i=1}^\infty q^{k(-i+1)} 
    = \frac{1-q^{ks}}{1-q^k}q^k. 
\end{gathered}
\eeqnn
The relation between the tau function and 
the 5D partition function thereby takes such a form as 
\beq
  \tau(\bst,s) = \left(-\sum_{k=1}^\infty\frac{q^kt_k}{1-q^k}\right) 
   q^{(4s^3-s)/12}(qQ)^{s(s+1)/2}e^{\phi(\bst,s,\emptyset)}\tilde{Z}(\bst,s). 
\eeq

Plugging this expression into (\ref{Toda-Fay1}) and (\ref{Toda-Fay2}),  
we obtain bilinear equations for $\tilde{Z}(\bst,s)$.  
The prefactors $q^{(4s^3-s)/12}$, $(qQ)^{s(s+1)/12}$ 
and $e^{\phi(\bst,s,\emptyset)}$ of $\tilde{Z}(\bst,s)$ yield 
extra factors in the bilinear equations, e.g., 
\beqnn
\begin{aligned}
  \frac{e^{\phi(\bst-[x],s+1,\emptyset)}e^{\phi(\bst-[y],s,\emptyset)}}
        {e^{\phi(\bst,s+1,\emptyset)}e^{\phi(\bst-[x]-[y],s,\emptyset)}}
  &= \exp\left(- \sum_{k=1}^\infty\frac{x^k}{k}
       (\phi_k(s+1,\emptyset) - \phi_k(s,\emptyset))\right)\\
  &= \exp\left(- \sum_{k=1}^\infty\frac{x^k}{k}q^{k(s+1)}\right)\\
  &= 1 - q^{s+1}x. 
\end{aligned}
\eeqnn
In the same sense, we have the following extra factors: 
\beqnn
  \frac{e^{\phi(\bst-[y],s+1,\emptyset)}e^{\phi(\bst-[x],s,\emptyset)}}
        {e^{\phi(\bst,s+1,\emptyset)}e^{\phi(\bst-[x]-[y],s,\emptyset)}}
  = 1 - q^{s+1}y, 
\eeqnn
\beqnn
  \frac{q^{(4(s+1)^3-(s+1))/12}q^{(4(s-1)^3-(s-1))/12}}{q^{(4s^3-s)/6}} = q^{2s},
\eeqnn
\beqnn
  \frac{(qQ)^{(s+1)(s+2)/2}(qQ)^{(s-1)s/2}}{(qQ)^{s(s+1)}} = qQ,
\eeqnn
\beqnn
  \frac{e^{\phi(\bst,s+1,\emptyset)}e^{\phi(\bst-[x]-[y],s-1,\emptyset)}}
        {e^{\phi(\bst-[x],s,\emptyset)}e^{\phi(\bst-[y],s,\emptyset)}}
  = \frac{\exp\left(\sum_{k=1}^\infty(q^{k(s+1)} - q^{ks})t_k\right)}
         {(1 - q^sx)(1 - q^sy)}. 
\eeqnn
Thus the bilinear equations for $\tilde{Z}(\bst,s)$ take 
the following form: 
\begin{align}
  (y-x)\tilde{Z}(\bst,s+1)\tilde{Z}(\bst-[x]-[y],s)& \notag\\
  \mbox{} - (1-q^{s+1}x)y\tilde{Z}(\bst-[x],s+1)\tilde{Z}(\bst-[y],s)& \notag\\
  \mbox{} + (1-q^{s+1}y)x\tilde{Z}(\bst-[y],s+1)\tilde{Z}(\bst-[x],s) &= 0, 
    \label{Zt-bilineq1}\\
  \tilde{Z}(\bst-[x],s)\tilde{Z}(\bst-[y],s) 
  - \tilde{Z}(\bst,s)\tilde{Z}(\bst-[x]-[y],s)& \notag\\
  \mbox{} + \frac{q^{2s}qQ}{(1-q^sx)(1-q^sy)}
            \exp\left(\sum_{k=1}^\infty(q^{k(s+1)}-q^{ks})t_k\right)& \notag\\
  \mbox{} \times xy\tilde{Z}(\bst,s+1)\tilde{Z}(\bst-[x]-[y],s-1) &= 0. 
    \label{Zt-bilineq2}
\end{align}

\subsection{4D limit of 5D partition function}

The prescription of 4D limit for $Z(\bst)$ can be carried over 
to $\tilde{Z}(\bst,s)$.  Note that the external potentials 
$\tilde{\phi}_k(s,\lambda)$ in $\tilde{Z}(\bst,s)$ 
are $s$-dependent analogues of $\phi_k(\lambda)$'s: 
\beqnn
  \tilde{\phi}_k(s,\lambda) 
    = \sum_{i=1}^\infty\left(q^{k(\lambda_i-i+1+s)} - q^{k(-i+1+s)}\right). 
\eeqnn
Consequently, as $R \to 0$ under the $R$-dependent setting 
(\ref{qQ-4Dlimit}) of $q$ and $Q$, we have the relation 
\beq
  \sum_{j=1}^k(-1)^{k-j}\binom{k}{j}\tilde{\phi}_j(s,\lambda) 
  = \tilde{\phi}^{\frD}_k(s,\lambda)(-R\hbar)^k + O(R^{k+1}), 
\eeq
where 
\beqnn
  \tilde{\phi}^{\frD}_k(s,\lambda) 
  = \sum_{i=1}^\infty\left((\lambda_i-i+1+s)^k - (-i+1+s)^k\right), 
\eeqnn
hence a meaningful limit 
\beq
  \lim_{R\to 0}\tilde{Z}(\bst(\bsT,R),s) = \tilde{Z}_{\frD}(\bsT,s) 
\eeq
under the same $R$-dependent parametrization (\ref{t-T-rel}) 
of the coupling constants as used for the 4D limit of $Z(\bst)$.  
$\tilde{Z}_{\frD}(\bsT,s)$ is an $s$-dependent analogue 
of $Z_{\frD}(\bsT)$ of the form 
\beq
  \tilde{Z}_{\frD}(\bsT,s)
  = \sum_{\lambda\in\calP}\left(\frac{\dim\lambda}{|\lambda|!}\right)^2 
      \left(\frac{\Lambda}{\hbar}\right)^{2|\lambda|}
      e^{\tilde{\phi}_{\frD}(\bsT,s,\lambda)} 
\eeq
with the external potential 
\beqnn
  \tilde{\phi}_{\frD}(\bsT,s,\lambda) 
  = \sum_{k=1}^\infty T_k\tilde{\phi}^{\frD}_k(s,\lambda). 
\eeqnn

Let us recall that the correction term (\ref{4Dphi-corr}) 
of $\phi^{\frD}_k(s,\lambda)$ is nothing but $\phi^{\frD}_k(s,\emptyset)$. 
In other words, we have the relation 
\beqnn
  \tilde{\phi}_{\frD}(\bsT,s,\lambda) 
  = \phi_{\frD}(\bsT,s,\lambda) - \phi_{\frD}(\bsT,s,\emptyset). 
\eeqnn
Thus $\tilde{Z}_{\frD}(\bsT,s)$ is related to 
the full 4D partition function (\ref{Z4D(t,s)-Psum}) as 
\beq
  Z_{\frD}(\bsT,s) 
  = (\Lambda^2/\hbar^2)^{s(s+1)/2}e^{\phi_{\frD}(\bsT,s,\emptyset)}\tilde{Z}_{\frD}(\bsT,s) 
  \label{Z4D-Zt4D-rel} 
\eeq
just like the relation (\ref{Z-Zt-rel}) 
between $Z(\bst,s)$ and $\tilde{Z}(\bst,s)$.

\subsection{Bilinear equations for 4D partition function}

We can derive bilinear equations for $\tilde{Z}_{\frD}(\bsT,s)$ 
from (\ref{Zt-bilineq1}) and (\ref{Zt-bilineq2}) 
by setting 
\beqnn
  x = e^{RX},\quad y = e^{RY} 
\eeqnn
just like (\ref{x-X-rel}) and (\ref{x-X-rel2}), 
and letting $R \to 0$ under the $R$-dependent parametrization 
(\ref{qQ-4Dlimit}), (\ref{t-T-rel}) of $q,Q$ and $\bst$.  

Deriving a 4D counterpart of (\ref{Zt-bilineq1}) 
is straightforward. In the limit as $R \to 0$, 
the shifted partition functions in (\ref{Zt-bilineq1}) 
turn into shifted partition functions of $\tilde{Z}_{\frD}(\bsT,s)$ as 
\beqnn
\begin{gathered}
  \tilde{Z}(\bst - [x], s) \to \tilde{Z}_{\frD}(\bsT - [\hbar/X], s),\\
  \tilde{Z}(\bst - [y], s) \to \tilde{Z}_{\frD}(\bsT - [\hbar/Y], s),\\
  \tilde{Z}(\bst - [x] - [y], s) 
    \to \tilde{Z}_{\frD}(\bsT - [\hbar/X] - [\hbar/Y], s) 
\end{gathered}
\eeqnn
just as shown in the derivation of (\ref{Z4D(T,XX)-Psum}). 
Other simple factors in (\ref{Zt-bilineq1}) behave as 
\beqnn
\begin{gathered}
  x = 1 + O(R),\quad y = 1 + O(R),\\
  y - x = R(Y - X) + O(R),\\
  1 - q^{s+1}x = (\hbar(s+1) - X)R + O(R),\\
  1 - q^{s+1}y = (\hbar(s+1) - Y)R + O(R). 
\end{gathered}
\eeqnn
Thus we obtain the following bilinear equation: 
\begin{align}
    (Y-X)\tilde{Z}_{\frD}(\bst,s+1)
      \tilde{Z}_{\frD}(\bst-[\hbar/X]-[\hbar/Y],s)& \notag\\
  \mbox{} - (Y - \hbar(s+1))\tilde{Z}_{\frD}(\bst-[\hbar/X],s+1)
            \tilde{Z}_{\frD}(\bst-[\hbar/Y],s)& \notag\\
  \mbox{} + (X - \hbar(s+1))\tilde{Z}_{\frD}(\bst-[\hbar/Y],s+1)
            \tilde{Z}_{\frD}(\bst-[\hbar/X],s) &= 0. 
  \label{Zt4D-bilineq3}
\end{align}

To derive a 4D counterpart of (\ref{Zt-bilineq2}), 
we have to cope with the strange exponential factor therein. 
By its origin, this factor is related to $\phi(\bst,s,\emptyset)$ 
and $\phi(\bst,s\pm 1,\emptyset)$ as 
\beqnn
  \exp\left(\sum_{k=1}^\infty(q^{k(s+1)} - q^{ks})t_k\right) 
  = \frac{e^{\phi(\bst,s+1,\emptyset)}e^{\phi(\bst,s-1,\emptyset)}}
        {e^{2\phi(\bst,s,\emptyset)}}. 
\eeqnn
Consequently, upon substituting $q = e^{-R\hbar}$ and $t_k = t_k(\bsT,R)$, 
we can take the limit as $R \to 0$: 
\beqnn
  \lim_{R\to 0}\exp\left(\sum_{k=1}^\infty(q^{k(s+1)} - q^{ks})t_k\right) 
  = \frac{e^{\phi_{\frD}(\bsT,s+1,\emptyset)}e^{\phi_{\frD}(\bsT,s-1,\emptyset)}}
        {e^{2\phi_{\frD}(\bsT,s,\emptyset)}}. 
\eeqnn
Thus (\ref{Zt-bilineq2}) turns into the following bilinear equation: 
\begin{align}
  \tilde{Z}_{\frD}(\bsT-[\hbar/X],s)\tilde{Z}_{\frD}(\bsT-[\hbar/Y],s)&\notag\\
  \mbox{} - \tilde{Z}_{\frD}(\bsT,s)
            \tilde{Z}_{\frD}(\bsT-[\hbar/X]-[\hbar/Y],s)&\notag\\
  \mbox{} + \frac{\Lambda^2}{(X-\hbar s)(Y-\hbar s)}
            \frac{e^{\phi_{\frD}(\bsT,s+1,\emptyset)}e^{\phi_{\frD}(\bsT,s-1,\emptyset)}}
            {e^{2\phi_{\frD}(\bsT,s,\emptyset)}}&\notag\\
  \mbox{} \times\tilde{Z}_{\frD}(\bsT,s+1)
                \tilde{Z}_{\frD}(\bsT-[\hbar/X]-[\hbar/Y],s-1) &= 0. 
    \label{Zt4D-bilineq4}
\end{align}

Since $\tilde{Z}_{\frD}(\bsT,s)$ and $Z_{\frD}(\bsT,s)$ 
are linearly related as shown in (\ref{Z4D-Zt4D-rel}), 
we can translate the foregoing bilinear equations 
(\ref{Zt4D-bilineq3}) and (\ref{Zt4D-bilineq4}) 
for $\tilde{Z}_{\frD}(\bsT,s)$ to equations for $Z_{\frD}(\bsT,s)$ 
in much the same way as the derivation 
of (\ref{Zt-bilineq1}) and (\ref{Zt-bilineq2}). 
Thus we obtain the bilinear equations 
\begin{align}
  (Y-X)Z_{\frD}(\bsT,s+1)Z_{\frD}(\bsT-[\hbar/X]-[\hbar/Y], s)&\notag\\
  \mbox{} - XZ_{\frD}(\bsT-[\hbar/X],s+1)Z_{\frD}(\bsT-[\hbar/Y],s)&\notag\\
  \mbox{} + YZ_{\frD}(\bsT-[\hbar/Y],s+1)Z_{\frD}(\bsT-[\hbar/X],s)& = 0
\end{align}
and
\begin{align}
  Z_{\frD}(\bsT-[\hbar/X],s)Z_{\frD}(\bsT-[\hbar/Y],s)&\notag\\
  \mbox{} - Z_{\frD}(\bsT,s)Z_{\frD}(\bsT-[\hbar/X]-[\hbar/Y],s)&\notag\\
  \mbox{} + (\hbar^2/XY)Z_{\frD}(\bsT,s+1)
            Z_{\frD}(\bsT-[\hbar/X]-[\hbar/Y],s-1) &= 0 
\end{align}
for $Z_{\frD}(\bsT,s)$.  By replacing $\hbar/X \to x$ 
and $\hbar/Y \to y$, these equations turn into a form 
that is identical to (\ref{Toda-Fay1}) and (\ref{Toda-Fay2}). 
This implies that $Z_{\frD}(\bsT,s)$ is a tau function 
of the 1D Toda hierarchy.

\end{document}